\theoremstyle{plain}
\newtheorem{theorem}{Theorem}[section]
\newtheorem{lemma}[theorem]{Lemma}
\newtheorem{proposition}[theorem]{Proposition}
\begin{document}

\title{Discretization of Camassa-Holm peakon equation using orthogonal polynomials and matrix $LR$ transformations}

\author{R. Watanabe\footnote{Graduate School of Informatics, Kyoto University, Kyoto, 606-8501, Japan}, 
\and 
M. Iwasaki\footnote{Faculty of Life and Environmental Sciences, Kyoto Prefectural University, Kyoto, 606-0823, Japan} 
\and and \and S. Tsujimoto$^*$}

\maketitle


\begin{abstract}
Discrete integrable systems are closely related to orthogonal polynomials and isospectral matrix transformations. In this paper, we use these relationships to propose a nonautonomous time-discretization of the Camassa-Holm (CH) peakon equation, which describes the motion of peakon waves, which are soliton waves with sharp peaks. 
We then validate our time-discretization, and clarify its asymptotic behavior as the discrete-time goes to infinity. We present numerical examples to demonstrate that the proposed discrete equation captures peakon wave motions. 
\end{abstract}

%
\vspace{2pc}
\noindent{Keywords}: Camassa-Holm equation, peakon, orthogonal polynomials, $LR$ transformations, integrable systems

\vspace{1pc}
\noindent{E-mail}: watanabe.ryoto.37p@st.kyoto-u.ac.jp
%
%
%
%
%
%
\section{Introduction}
\label{sec:1}
%
Some integrable systems are being found to have intriguing relationships with orthogonal polynomials and isospectral matrix transformations. The Toda equation~\cite{Toda_1967}, which describes the dynamics of nonlinear springs with an exponential potential, is one such integrable system. Time evolutions of the Toda equation correspond to isospectral transformations of tridiagonal matrices~\cite{Symes_1982}. 
Eigenfunctions of the Toda equation, which do not change except for scalar multiplication when subjected to a linear operator corresponding to the time evolution by the Toda equation, are orthogonal polynomials~\cite{Chihara_1978}.
A time-discretization of the Toda (dToda) equation was first designed by using a skillful bilinear form technique~\cite{Hirota_1993}. Discrete-time evolutions by the dToda equation generate tridiagonal $LR$ isospectral transformations that decompose tridiagonal matrices into products of bidiagonal matrices and then reverse the order of the products. 
Thus, the dToda equation is simply the recursion formula of the quotient-difference (qd) algorithm for computing tridiagonal eigenvalues~\cite{Hirota_1993}. 
The dToda equation can also be derived by focusing on the compatibility condition of discrete-time evolutions of orthogonal polynomials by the Christoffel-Darboux transformation and their inverse evolutions by the Geronimus transformation~\cite{Spiridonov_1995,Zhedanov_1997}. 

An integrable system closely related to the Toda equation is the Lotka-Volterra (LV) system~\cite{Kac_1975,Manakov_1975}, which describes the simplest predator-prey relationships. 
Similar to the Toda case, the LV system is discretized with respect to the time variable using bilinear forms~\cite{Tsujimoto_1993} and symmetric orthogonal polynomials~\cite{Spiridonov_1997}. 
Since discrete-time evolutions by the discrete LV (dLV) system can generate symmetric tridiagonal $LR$ transformations~\cite{Iwasaki_2002,Tsujimoto_2002}, the dLV system can be applied to compute singular values of bidiagonal matrices, which is equivalent to computing symmetric tridiagonal eigenvalues. 
To model more complicated predator-prey relationships, the dLV system is extended to discrete hungry LV (dhLV) systems~\cite{Suris_1996a,Tsujimoto_1993}. 
Moreover, similar to the dToda-dLV relationship, the discrete hungry Toda (dhToda) equations can be developed as extensions of the dToda equation by transforming dhLV systems~\cite{Fukuda_2013a,Tokihiro_1999}. 
These discrete hungry integrable systems are related to $LR$ transformations of Hessenberg matrices~\cite{Fukuda_2013b}. 
The relativistic discrete Toda equation~\cite{Suris_1996b} and the elementary Toda orbits~\cite{Kobayashi_2021} are both special cases of the dhToda equation~\cite{Yamamoto_2022} derived by observing spectral transformations of Laurent biorthogonal and $\epsilon$-Laurent biorthogonal polynomials, respectively. 

The Camassa-Holm (CH) equation~\cite{Camassa_1993,Camassa_1994} is an integrable shallow water equation defined by:
\begin{equation}\label{eqn:CH}
u_t-u_{xxt}+3uu_x-2u_xu_{xx}-uu_{xxx}=0,
\end{equation}
where $u\coloneqq u(x,t)$ denotes the height of the fluid surface in direction $x$ at continuous-time $t$, and the subscripts signify the partial derivatives with respect to these terms. 
Remarkably, the CH equation \eqref{eqn:CH} can describe soliton waves with sharp peaks, known as peakon waves. 
Assuming that the CH equation \eqref{eqn:CH} has a peakon solution, we can obtain a time evolution equation with respect to the amplitudes and positions of peakon waves. 
Beals et al.~\cite{Beals_2000,Beals_2001} related the time evolution equation to a discrete string problem, and solved it by using the inverse spectral method~\cite{Moser_1975}, while elucidating the relationship to the Toda equation. Ragnisco et al.~\cite{Ragnisco_1996} showed that the CH peakon equation and the Toda equation satisfy the common Lie-Poisson structure and presented a time-discretization of the CH peakon equation. 
See also Refs.~\cite{Chang_2018_BKP},~\cite{Chang_2018_CKP} and \cite{Chang_2020} concerning relationships of the CH equation to other peakon equations and the corresponding Toda equations. 

In this paper, we propose a new time-discretization of the CH peakon equation by considering extensions of studies on the dToda equation. The key aim of this paper is to extend spectral transformations of orthogonal polynomials and $LR$ transformations of matrices related to the dToda equation. 
Our discrete CH peakon equation differs from the Ragnisco equation~\cite{Ragnisco_1996} in that it is a nonautonomous equation involving arbitrary parameters. In fact, its autonomous equation is the Ragnisco equation. 
Moreover, by using our discrete CH peakon equation, discrete-time evolutions with suitable discretization parameters enable us to observe motions of the peakon waves on the corresponding discrete time scale. We numerically demonstrate this perspective in the third example of Section \ref{sec:5}.

The remainder of this paper is organized as follows. 
In Section \ref{sec:2}, we describe the CH peakon equation and the related isospectral transformations of tridiagonal matrices. 
In Section \ref{sec:3}, we focus on spectral transformations of orthogonal polynomials, and then derive discrete-time evolution equations from recursion formulas concerning  orthogonal polynomials. 
In Section \ref{sec:4}, we relate the discrete-time evolution equations to $LR$-like transformations, and then design a new discrete CH peakon equation. 
In Section \ref{sec:5}, we show the determinantal solution to the proposal discrete equation, and clarify the asymptotic behavior as discrete-time goes to infinity 
by expanding the determinantal solution. We also numerically confirm that the proposed discrete equation captures the motion of peakon waves. 
Finally, in Section \ref{sec:6}, we give concluding remarks. 
%
%
\section{Camassa-Holm peakon equation}
\label{sec:2}
%
In this section, we review the part of Refs.~\cite{Ragnisco_1996} that discusses the relationship of time evolutions in the CH peakon equation to isospectral transformations of tridiagonal matrices. 
This relationship is the starting point in deriving a time-discretization of the CH peakon equation in later sections. 

From Camassa et al.~\cite{Camassa_1993}, the $N$-peakon solution to the CH equation \eqref{eqn:CH} is given by:
\begin{equation}\label{eqn:CH_solution}
u(x,t)=\sum_{i=1}^Np_i(t)e^{-\vert x-x_i(t) \vert}, 
\end{equation}
where $p_i(t)$ and $x_i(t)$ are the amplitude and position of the $i$th peakon, respectively. 
Without loss of generality, we can index peakons as $1,2,\dots,N$ in order of distance from the origin at $t=0$. In other words, we may assume that $x_{1}(0)>x_{2}(0)>\cdots>x_{N}(0)$. 
Moreover, we assume that $p_1(0)>0,p_2(0)>0,\ldots,p_N(0)>0$. 
According to Holden et al.~\cite{Holden_2006}, $p_1(t)>0,p_2(t)>0,\ldots,p_N(t)>0$ and $x_{1}(t)>x_{2}(t)>\cdots>x_{N}(t)$ at any $t$ under the assumptions, and $p_i(t)$ and $x_i(t)$ satisfy the following peakon equation \eqref{eqn:peakon}:
\begin{equation}\label{eqn:peakon}
\left\{\begin{aligned}
&\frac{dp_i(t)}{dt}=-p_i(t)\sum_{j=1}^{i-1}p_j(t)e^{x_i(t)-x_j(t)}
+p_i(t)\sum_{j=i+1}^Np_j(t)e^{x_j(t)-x_i(t)},\quad i=1,2,\ldots,N,\\
&\frac{dx_i(t)}{dt}=\sum_{j=1}^{i-1} p_j(t) e^{x_i(t)-x_j(t)}
+p_i(t)+\sum_{j=i+1}^{N} p_j(t) e^{x_j(t)-x_i(t)},\quad i=1,2,\ldots,N.
\end{aligned}\right.
\end{equation}
In this paper, we refer to \eqref{eqn:peakon} as the peakon equation shortly. 

We now introduce an $N$-by-$N$ symmetric matrix ${\cal A}(t)$ involving the peakon variables $p_i(t)$ and $x_i(t)$:
\[
{\cal A}(t) \coloneqq 
\left( \begin{array}{cccc}
p_{1}(t) & \sqrt{p_{1}(t)p_{2}(t)} e^{\frac{x_{2}(t)-x_{1}(t)}{2}} & \cdots & \sqrt{p_{1}(t)p_{N}(t)} e^{\frac{x_{N}(t)-x_{1}(t)}{2}} \\
\sqrt{p_{1}(t)p_{2}(t)} e^{\frac{x_{2}(t)-x_{1}(t)}{2}} & p_{2}(t) & \cdots & \sqrt{p_{2}(t)p_{N}(t)} e^{\frac{x_{N}(t)-x_{2}(t)}{2}} \\
\vdots & \vdots & \ddots & \vdots \\
\sqrt{p_{1}(t)p_{N}(t)} e^{\frac{x_{N}(t)-x_{1}}{2}(t)} & \sqrt{p_{2}(t)p_{N}(t)} e^{\frac{x_{N}(t)-x_{2}(t)}{2}} & \cdots & p_{N}(t)
\end{array} \right).
\]
Thus, we obtain a matrix representation of the peakon equation \eqref{eqn:peakon}. 
\begin{proposition}[cf.~\cite{Ragnisco_1996}]\label{prop:Lax_representation_peakon}
The peakon equation \eqref{eqn:peakon} can be represented in matrix form as:
\begin{equation}\label{eqn:Lax_representation_peakon}
\frac{d{\cal A}(t)}{dt}={\cal A}(t)\Pi({\cal A}(t))-\Pi({\cal A}(t)){\cal A}(t),
\end{equation}
where $\Pi({\cal A}(t))$ denotes an antisymmetric matrix determined using the strictly lower triangular part $({\cal A}(t))_<$ and the strictly upper triangular part $({\cal A}(t))_>$ of the matrix ${\cal A}(t)$ as follows:
\[
\Pi({\cal A}(t)) \coloneqq 
\frac{1}{2}({\cal A}(t))_<- \frac{1}{2}({\cal A}(t))_>
\]
\end{proposition}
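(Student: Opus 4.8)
The plan is to verify the matrix identity \eqref{eqn:Lax_representation_peakon} entrywise by computing both sides directly and matching them against the scalar peakon equations \eqref{eqn:peakon}. First I would record a convenient shorthand: write $A_{ij}(t) = \sqrt{p_i p_j}\, e^{(x_j - x_i)/2}$ for $i < j$, with $A_{ii} = p_i$ and $A_{ij} = A_{ji}$ by symmetry, so that $A_{ij} = \sqrt{p_i p_j}\, e^{-|x_i - x_j|/2}$ uniformly. Differentiating using the product/chain rule gives
\[
\frac{d A_{ij}}{dt} = A_{ij}\left( \frac{1}{2}\frac{\dot p_i}{p_i} + \frac{1}{2}\frac{\dot p_j}{p_j} + \frac{1}{2}(\dot x_j - \dot x_i)\,\mathrm{sgn}(j-i) \right),
\]
and the idea is to substitute the right-hand sides of \eqref{eqn:peakon} for $\dot p_i, \dot p_j, \dot x_i, \dot x_j$ and simplify. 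The ratio $\dot p_i / p_i$ is already a clean sum of exponentials, and $\dot x_i$ is a similar sum, so the entries of $dA/dt$ become explicit double sums of products $\sqrt{p_k p_\ell}\,e^{(\cdots)}$.

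Next I would compute the commutator side. Since $\Pi(A) = \tfrac12 A_< - \tfrac12 A_>$ and $A = A_< + D + A_>$ with $D = \mathrm{diag}(p_1,\dots,p_N)$, one has the useful decomposition $\Pi(A) = \tfrac12(A - D) - A_> = A_< - \tfrac12(A - D)$, which lets me express $A\,\Pi(A) - \Pi(A)\,A$ in terms of $A A_< - A_< A$ plus a term involving $[A, D]$ (note $[A, A] = 0$ kills the $\tfrac12 A$ piece). Then $(A\Pi(A) - \Pi(A)A)_{ij} = \sum_k A_{ik}(\Pi(A))_{kj} - \sum_k (\Pi(A))_{ik} A_{kj}$, and expanding $(\Pi(A))_{kj}$ as $\tfrac12 A_{kj}$ for $k>j$, $-\tfrac12 A_{kj}$ for $k<j$, $0$ for $k=j$, I would collect the resulting single sums over the index $k$.

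The core of the argument is then matching these two explicit double/single sums term by term. I expect this to split naturally into the diagonal case $i=j$ and the off-diagonal case $i<j$ (the case $i>j$ following by symmetry, which should itself be checked to be consistent). For the diagonal entries, the identity should reduce exactly to the equation for $\dot p_i$ after the terms $A_{ik}(\Pi(A))_{ki} - (\Pi(A))_{ik}A_{ki} = A_{ik}^2\cdot(\pm\tfrac12 \mp \tfrac12)\cdot(\text{sign pattern})$ telescope appropriately; for the off-diagonal entries, one uses that $A_{ik}A_{kj}/A_{ij} = \sqrt{p_k p_j}\,e^{\cdots}\cdot$(a factor depending on whether $k$ lies outside, between, or at the endpoints of the interval $[i,j]$), and these factors are exactly what is needed so that the surviving terms reproduce the combination of $\dot p_i/p_i + \dot p_j/p_j + \dot x_j - \dot x_i$ above. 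The main obstacle will be the bookkeeping in the off-diagonal case: one must carefully partition the summation index $k$ into the ranges $k < i$, $k = i$, $i < k < j$, $k = j$, $k > j$, track the sign of $k - i$ and $k - j$ in the exponentials and in $\Pi$, and confirm that the "interior" terms ($i<k<j$) cancel in pairs while the "exterior" terms assemble into the claimed derivative — a computation that is routine but error-prone, and where the factor of $\tfrac12$ must be handled consistently throughout.
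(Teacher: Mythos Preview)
The paper does not actually supply a proof of this proposition: it is quoted (with a ``cf.'') from Ragnisco--Bruschi and no argument is given in the text. Your plan of a direct entrywise verification is therefore not being compared against anything in the paper, but it is a correct and natural way to establish the result.

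A brief check confirms that your outline goes through. For the diagonal entries one gets, using the antisymmetry of $\Pi$ and the symmetry of ${\cal A}$,
\[
({\cal A}\Pi-\Pi{\cal A})_{ii}=2\sum_k {\cal A}_{ik}\Pi_{ki}
=-\sum_{k<i}p_ip_k e^{x_i-x_k}+\sum_{k>i}p_ip_k e^{x_k-x_i},
\]
which is exactly $\dot p_i$. For $i<j$, partitioning $k$ into $k<i$, $k=i$, $i<k<j$, $k=j$, $k>j$ gives coefficients $-1,-\tfrac12,0,\tfrac12,1$ in front of ${\cal A}_{ik}{\cal A}_{kj}$, while ${\cal A}_{ik}{\cal A}_{kj}/{\cal A}_{ij}$ equals $p_k e^{x_i-x_k}$, $p_k$, $p_k e^{x_k-x_j}$ on the three outer, middle, and outer ranges respectively. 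Hence
\[
\frac{({\cal A}\Pi-\Pi{\cal A})_{ij}}{{\cal A}_{ij}}
=-\sum_{k<i}p_k e^{x_i-x_k}-\tfrac12 p_i+\tfrac12 p_j+\sum_{k>j}p_k e^{x_k-x_j},
\]
and on the other side your formula $\tfrac12\dot p_i/p_i+\tfrac12\dot p_j/p_j+\tfrac12(\dot x_j-\dot x_i)$ collapses, after substituting \eqref{eqn:peakon}, to the same expression. So the interior cancellation you anticipated is real, and the bookkeeping is exactly as you describe. One small remark: in your derivative formula the factor should read $\tfrac12(\dot x_j-\dot x_i)$ for $i<j$ rather than carrying a $\mathrm{sgn}(j-i)$ if you are only treating the upper triangle and invoking symmetry, but this is a cosmetic point and your intent is clear.
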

\noindent
Here, $\sigma_k$ and $\Psi_k(t)$ denote an eigenvalue of ${\cal A}(t)$ and the corresponding eigenvector, respectively. 
Then, the matrix representation \eqref{eqn:Lax_representation_peakon} is equivalent to the compatibility condition of the linear equations:
\begin{equation}\label{eqn:Lax_pair}
\displaystyle
\left\{\begin{aligned}
&{\cal A}(t)\Psi_i(t)=\sigma_i\Psi_i(t),\quad i=1,2,\ldots,N,\\
&\frac{d\Psi_i(t)}{dt}=-\Pi({\cal A}(t))\Psi_i(t),\quad i=1,2,\ldots,N.
\end{aligned}\right.
\end{equation}
Thus, time evolutions in the peakon equation \eqref{eqn:peakon} generate isospectral transformations of the symmetric tridiagonal ${\cal A}(t)$. 

We emphasize that ${\cal A}(t)$ is a symmetric semiseparable matrix~\cite{Vandebril_2008}. 
An $N$-by-$N$ matrix $S$ is called symmetric semiseparable if the $(i,j)$ entries of $S$, denoted $s_{i,j}$, can be expressed in product form as:
\[
s_{i,j}=
\left\{\begin{aligned}
u_iv_j,  i=1,2,\ldots,N,\quad j=1,2,\ldots,i,\\
u_jv_i,  i=1,2,\ldots,N,\quad j=i,i+1,\ldots,N.
\end{aligned}\right.
\]
Setting $u_1=\sqrt{p_1(t)},u_2=\sqrt{p_2(t)}e^{\frac{x_2(t)-x_1(t)}{2}},\ldots,u_N=\sqrt{p_N(t)}e^{\frac{x_N(t)-x_1(t)}{2}}$ and $v_1=\sqrt{p_1(t)},v_2=\sqrt{p_2(t)}e^{\frac{x_1(t)-x_2(t)}{2}},\ldots,v_N=\sqrt{p_N(t)}e^{\frac{x_1(t)-x_N(t)}{2}}$, we can easily check that $S={\cal A}(t)$. 

The following proposition describes the inverse of a symmetric semiseparable matrix.
\begin{proposition}[cf.~\cite{Vandebril_2008}]\label{prop:singlepair_tridiagonal}
Let $S$ be an invertible symmetric semiseparable matrix with $u_1\ne0,u_2\ne0,\ldots,u_N\ne0$ and $v_1\ne0,v_2\ne0,\ldots,v_N\ne0$. Then the inverse matrix $S^{-1}$ is a symmetric tridiagonal matrix. 
\end{proposition}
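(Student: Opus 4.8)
The plan is to show directly that $(S^{-1})_{i,j}=0$ whenever $\lvert i-j\rvert\ge 2$; that $S^{-1}$ is symmetric is immediate from $S^{\top}=S$. The one structural fact that does all the work is the following. Writing $s_{i,j}=u_{\max(i,j)}v_{\min(i,j)}$, for every index $p$ the ``south-west'' corner block $S[\{p,\dots,N\}\mid\{1,\dots,p\}]$ involves only entries $s_{a,b}=u_av_b$ with $a\ge b$, so it equals the outer product $(u_p,\dots,u_N)^{\top}(v_1,\dots,v_p)$; since its $(p,p)$ entry is $u_pv_p\ne 0$, this block has rank exactly $1$. This is precisely the point at which the hypotheses $u_i\ne 0$, $v_i\ne 0$ enter, and it is essential: rank $1$, not merely rank $\le 1$, is what will force the complementary block of $S^{-1}$ to be genuinely zero.

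Next I would invoke the nullity theorem (Gustafson's theorem; see Ref.~\cite{Vandebril_2008}), which for an invertible $N$-by-$N$ matrix $A$ and index sets $\alpha,\beta$ asserts $\operatorname{rank}(A[\alpha\mid\beta])=\lvert\alpha\rvert+\lvert\beta\rvert-N+\operatorname{rank}(A^{-1}[\beta^{c}\mid\alpha^{c}])$, with $\alpha^{c},\beta^{c}$ the complementary index sets. Taking $A=S$, $\alpha=\{p,\dots,N\}$ and $\beta=\{1,\dots,p\}$, the left-hand side equals $1$ by the previous paragraph while $\lvert\alpha\rvert+\lvert\beta\rvert-N=1$, so $\operatorname{rank}\!\big(S^{-1}[\{p+1,\dots,N\}\mid\{1,\dots,p-1\}]\big)=0$; that is, the block of $S^{-1}$ lying strictly below the subdiagonal and supported on those indices vanishes, for every $p=1,\dots,N$ (the cases $p=1$ and $p=N$ being vacuous). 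To finish, for any entry $(S^{-1})_{j,k}$ with $j\ge k+2$ I would take $p=k+1$: then $j\ge p+1$ and $k=p-1$, so $(S^{-1})_{j,k}$ sits inside one of these zero blocks and hence vanishes. Transposing — equivalently, running the symmetric argument on the ``north-east'' corners — disposes of the entries with $k\ge j+2$, and therefore $S^{-1}$ is tridiagonal, and symmetric because $S$ is.

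I do not expect a serious analytic obstacle: the nullity theorem carries the load, and the only thing requiring care is the index bookkeeping — arranging the complements $\alpha^{c},\beta^{c}$ so that they line up exactly with the tridiagonal band, and confirming that the degenerate values $p=1,N$ cause no trouble. A more hands-on alternative, which I would keep in reserve, is to exhibit the tridiagonal inverse explicitly — a matrix $T$ with off-diagonal entries of the form $T_{i,i+1}=T_{i+1,i}=-1/(u_{i+1}v_i-u_iv_{i+1})$ and diagonal entries built from the neighbouring $2$-by-$2$ minors of $S$ — and to verify $ST=I$ column by column; this is elementary, but it forces one to treat the first and last rows separately, which is exactly the boundary manipulation the nullity-theorem route avoids.
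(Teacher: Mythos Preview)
Your argument via the nullity theorem is correct, and the bookkeeping with $\alpha=\{p,\dots,N\}$, $\beta=\{1,\dots,p\}$ is set up properly: the south-west block is a genuine rank-$1$ outer product (here is where $u_pv_p\ne0$ is used), the nullity theorem then forces the complementary block of $S^{-1}$ to vanish, and sliding $p$ covers every entry strictly below the subdiagonal. Symmetry finishes the strictly-above-superdiagonal entries.

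As for comparison with the paper: there is nothing to compare. The paper does not prove Proposition~\ref{prop:singlepair_tridiagonal}; it is stated as a known fact with a reference to Vandebril, Van Barel and Mastronardi~\cite{Vandebril_2008}, and the authors simply use the conclusion to pass from the semiseparable matrix ${\cal A}(t)$ to its tridiagonal inverse $A(t)$. Your nullity-theorem route is in fact one of the standard proofs in that reference, so you have supplied exactly the argument the citation points to. The explicit-inverse alternative you sketch would also work and has the mild advantage of producing the closed formulas that the paper later records in~\eqref{eqn:peakon_Toda_variables}, at the cost of the boundary case-splitting you anticipate.
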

\noindent
We focus on the case where $\sigma_1\ne0,\sigma_2\ne0,\ldots,\sigma_N\ne0$. Obviously, ${\cal A}(t)$ is invertible. 
From Proposition \ref{prop:singlepair_tridiagonal}, we also see that $A(t)\coloneqq({\cal A}(t))^{-1}$ is symmetric tridiagonal with the form:
\[A(t)=\left( \begin{array}{cccc}
a_{1}(t) & b_{1}(t) & \\ b_{1}(t) & a_{2}(t) & \ddots \\
 & \ddots & \ddots & b_{N-1}(t) \\ & & b_{N-1}(t) & a_N(t)
\end{array} \right).\]
We can actually express $a_i(t)$ and $b_i(t)$ using the peakon variables $p_i(t)$ and $x_i(t)$ as: 
\begin{equation}\label{eqn:peakon_Toda_variables}
\displaystyle
\left\{\begin{aligned}
&a_i(t)\coloneqq \frac{1}{p_i(t)} 
\frac{ 1-e^{x_{i+1}(t)-x_{i-1}(t)} }{ (1-e^{x_i(t)-x_{i-1}(t)})(1-e^{x_{i+1}(t)-x_i(t)}) },\quad i=1,2,\ldots,N,\\
&b_i(t)\coloneqq-\frac{ 1 }{ \sqrt{p_i(t)p_{i+1}(t)} }
\frac{ e^{ \frac{ x_{i+1}(t)-x_i(t) }{2} } }{ 1-e^{x_{i+1}(t)-x_i(t)} },\quad i=1,2,\ldots,N-1.
\end{aligned}\right.
\end{equation}
Moreover, as ${\cal A}(t)=(A(t))^{-1}$, we can rewrite \eqref{eqn:Lax_representation_peakon} in terms of a tridiagonal matrix.
\begin{theorem}\label{thm:Lax_representation_peakon_Toda}
A time evolution of tridiagonal $A(t)$ is given as:
\begin{equation}\label{eqn:Lax_representation_peakon_Toda}
\frac{dA(t)}{dt}=A(t)\Pi((A(t))^{-1})-\Pi((A(t))^{-1})A(t).
\end{equation}
\end{theorem}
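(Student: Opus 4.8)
The plan is to derive \eqref{eqn:Lax_representation_peakon_Toda} directly from the Lax representation of Proposition \ref{prop:Lax_representation_peakon} by exploiting the relation ${\cal A}(t)=(A(t))^{-1}$, so that in particular $\Pi((A(t))^{-1})=\Pi({\cal A}(t))$. First I would record that, since $p_i(t)$ and $x_i(t)$ solve the smooth system \eqref{eqn:peakon} and $\sigma_i\ne 0$ for all $i$, both ${\cal A}(t)$ and $A(t)$ are differentiable and invertible on the time interval under consideration, so all the manipulations below are legitimate.

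The core computation is the standard differentiation of a matrix inverse. Differentiating $A(t){\cal A}(t)=I$ gives $\frac{d{\cal A}(t)}{dt}=-{\cal A}(t)\frac{dA(t)}{dt}{\cal A}(t)$. Substituting this into \eqref{eqn:Lax_representation_peakon} yields
\[
-{\cal A}(t)\frac{dA(t)}{dt}{\cal A}(t)={\cal A}(t)\Pi({\cal A}(t))-\Pi({\cal A}(t)){\cal A}(t).
\]
Then I would multiply this identity on the left and on the right by $A(t)$ and use $A(t){\cal A}(t)={\cal A}(t)A(t)=I$: the left-hand side collapses to $-\frac{dA(t)}{dt}$, while the right-hand side becomes $\Pi({\cal A}(t))A(t)-A(t)\Pi({\cal A}(t))$. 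Rearranging and replacing $\Pi({\cal A}(t))$ by $\Pi((A(t))^{-1})$ gives exactly \eqref{eqn:Lax_representation_peakon_Toda}.

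Conceptually, this is the general fact that a Lax-type flow $\dot M=[M,P]$ is preserved under inversion with the \emph{same} auxiliary matrix $P$, since $\frac{d}{dt}(M^{-1})=-M^{-1}[M,P]M^{-1}=[M^{-1},P]$; here $M={\cal A}(t)$ and $P=\Pi({\cal A}(t))$. Consequently there is essentially no obstacle: the only points requiring care are the bookkeeping of signs in the commutators and confirming that $\Pi$ is applied to the correct matrix, namely $(A(t))^{-1}={\cal A}(t)$, whose strictly lower and strictly upper triangular parts are what enter $\Pi$. One may additionally remark, although it is not needed for the statement, that the flow \eqref{eqn:Lax_representation_peakon_Toda} preserves the symmetric tridiagonal structure of $A(t)$: this follows from Proposition \ref{prop:singlepair_tridiagonal} together with the fact that \eqref{eqn:Lax_representation_peakon} keeps ${\cal A}(t)$ symmetric semiseparable.
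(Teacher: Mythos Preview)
Your proposal is correct and follows exactly the approach implicit in the paper: the paper does not spell out a proof but simply remarks, just before the theorem, that ``as ${\cal A}(t)=(A(t))^{-1}$, we can rewrite \eqref{eqn:Lax_representation_peakon} in terms of a tridiagonal matrix,'' which is precisely the inversion-of-a-Lax-flow computation you carry out. Your bookkeeping of signs and the identification $\Pi((A(t))^{-1})=\Pi({\cal A}(t))$ are both handled correctly.
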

\noindent
From Refs~\cite{Flaschka_1974_I} and \cite{Flaschka_1974_II} that the nonperiodic finite Toda equation satisfies $d\check{A}(t)/dt=\check{A}(t)\Pi(\check{A}(t))-\Pi(\check{A}(t))\check{A}(t)$, where $\tilde{A}(t)$ is symmetric tridiagonal. 
Thus, we find a relationship of the peakon equation \eqref{eqn:peakon} to the Toda equation
from the perspective of matrix representations. 
%
%
\section{Spectral transformations of orthogonal polynomials}
\label{sec:3}
%
In this section, we briefly explain some properties concerning spectral transformations of orthogonal polynomials. 

We consider the orthogonal polynomials $\phi_i(z)$ that satisfy the following three-term relationship:
\begin{equation*}
\begin{aligned}
&\phi_{-1}(z)\coloneqq0,\quad\phi_0(z)\coloneqq1,\\
&\phi_i(z)\coloneqq(z-a_i)\phi_{i-1}(z)-b_{i-1}^2\phi_{i-2}(z),\quad i=1,2,\ldots,N,
\end{aligned}
\end{equation*}
where $a_1,a_2,\ldots,a_N$ and $b_1\ne0,b_2\ne0,\ldots,b_{N-1}\ne0$ are constants. We can easily check that $\phi_1(z),\phi_2(z),\ldots,\phi_N(z)$ are $1$st-, $2$nd$-,\ldots,N$th-order monic polynomials, respectively. According to Favard's theorem~\cite{Chihara_1978}, there exists a unique linear functional ${\mathscr L}:\mathbb{R}[z]\to\mathbb{R}$ such that:
\begin{equation*}
\begin{aligned}
&{\mathscr L}[z^j\phi_i(z)]=h_i\delta_{i,j},\quad i=0,1,\ldots,N-1,\quad j=0,1,\ldots,i,\\
&{\mathscr L}[z^j\phi_N(z)]=0,\quad j=0,1,\ldots,
\end{aligned}
\end{equation*}
where $\delta_{i,j}$ is the Kronecker delta and $h_i$ are given using $b_1,b_2,\ldots,b_{N-1}$ and $h_0>0$ according to:
\[h_i\coloneqq (b_1b_2\cdots b_i)^2h_0,\quad i=1,2,\ldots,N-1.\]
Since $b_1^2,b_2^2,\ldots,b_{N-1}^2$ and $h_0$ are positive, it is obvious that ${\mathscr L}$ is positive definite. 

We now introduce the moment sequence $\{\mu_i\}$ with respect to ${\mathscr L}$ as:
\[\mu_i\coloneqq{\mathscr L}[z^i],\quad i=0,1,\ldots.\]
Then, we obtain determinantal expressions of the orthogonal polynomials $\phi_i(z)$ using the moments $\mu_i$. 
\begin{proposition}[cf.~\cite{Chihara_1978}]
The orthogonal polynomials $\phi_i(z)$ can be expressed in determinantal form as:
\[\phi_i(z)=\frac{1}{\tau_i}
\left\vert\begin{array}{ccccc}
\mu_0 & \mu_1 & \cdots & \mu_{i-1} & \mu_i \\
\mu_1 & \mu_2 & \cdots & \mu_i & \mu_{i+1} \\
\vdots & \vdots & \ddots & \vdots & \vdots \\
\mu_{i-1} & \mu_i & \cdots & \mu_{2i-2} & \mu_{2i-1} \\
1 & z & \cdots & z^{i-1} & z^i
\end{array}\right\vert,\quad i=1,2,\dots,N,\]
where $\tau_i$ are the Hankel determinants of degree $i$ given by:
\begin{equation*}
\begin{aligned}
&\tau_{-1}\coloneqq0,\quad\tau_0\coloneqq1,\\
&\tau_i\coloneqq
\left\vert\begin{array}{cccc}
\mu_0 & \mu_1 & \cdots & \mu_{i-1} \\
\mu_1 & \mu_2 & \cdots & \mu_i  \\
\vdots & \vdots & \ddots & \vdots \\
\mu_{i-1} & \mu_i & \cdots & \mu_{2i-2} 
\end{array}\right\vert,\quad i=1,2,\dots,N+1.
\end{aligned}
\end{equation*}
\end{proposition}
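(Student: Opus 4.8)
The plan is the standard ``verify the three defining properties'' argument: show that the right-hand side is a well-defined monic polynomial of degree $i$ satisfying the orthogonality relations that characterize $\phi_i(z)$, and then invoke uniqueness. Throughout, write $\tilde\phi_i(z)$ for the determinantal expression on the right. The first thing to check is that the denominator does not vanish: expanding the Hankel determinant $\tau_i=\det(\mu_{a+b})_{0\le a,b\le i-1}$ as the Gram matrix of $1,z,\dots,z^{i-1}$ with respect to ${\mathscr L}$, positive definiteness of ${\mathscr L}$ (already noted, since $b_1^2,\dots,b_{i-1}^2,h_0>0$) gives $\tau_i>0$ for $i=1,\dots,N+1$. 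Hence $\tilde\phi_i(z)$ is well defined.

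Next I would identify $\tilde\phi_i(z)$ as a monic polynomial of degree $i$. Expanding the $(i{+}1)\times(i{+}1)$ determinant along its last row $(1,z,\dots,z^i)$ writes $\tilde\phi_i(z)=\frac1{\tau_i}\sum_{k=0}^{i}z^k C_k$, where $C_k$ is $(-1)^{i+k}$ times the minor obtained by deleting the last row and the $(k{+}1)$st column; these cofactors are constants built from the moments. The top coefficient is $C_i/\tau_i$, and $C_i$ is exactly the minor obtained by deleting the last row and last column, namely $\tau_i$ itself. Thus the coefficient of $z^i$ is $1$, and $\deg\tilde\phi_i=i$ with leading coefficient one.

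Then comes the orthogonality, which is the heart of the argument. Apply the functional ${\mathscr L}$, acting in the variable $z$, to $z^j\tilde\phi_i(z)$ for $j=0,1,\dots,i-1$. By linearity and ${\mathscr L}[z^m]=\mu_m$, this amounts to replacing the last row $(1,z,\dots,z^i)$ of the determinant by $(\mu_j,\mu_{j+1},\dots,\mu_{j+i})$. But for $0\le j\le i-1$ this row is precisely the $(j{+}1)$st row of the matrix (the $r$th row being $(\mu_{r-1},\mu_r,\dots,\mu_{r-1+i})$), so the determinant has two equal rows and vanishes. Hence ${\mathscr L}[z^j\tilde\phi_i(z)]=0$ for $j=0,1,\dots,i-1$, i.e.\ $\tilde\phi_i$ is monic of degree $i$ and ${\mathscr L}$-orthogonal to all polynomials of degree $<i$.

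Finally, conclude by uniqueness. Since $\phi_i(z)-\tilde\phi_i(z)$ has degree at most $i-1$, it can be written as $\sum_{k=0}^{i-1}c_k\phi_k(z)$. For each $m\le i-1$, the relation ${\mathscr L}[z^m(\phi_i-\tilde\phi_i)]=0$ holds (by the orthogonality relations for $\phi_i$ from Favard's theorem and by the previous paragraph for $\tilde\phi_i$); expanding $z^m$ in the basis $\phi_0,\dots,\phi_m$ and using ${\mathscr L}[\phi_k\phi_\ell]=h_k\delta_{k,\ell}$ shows successively that $c_mh_m=0$, hence $c_m=0$ because $h_m=(b_1\cdots b_m)^2h_0>0$. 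Therefore $\tilde\phi_i=\phi_i$, which is the claim (the case $i=N$ being included since the uniqueness step only uses $h_0,\dots,h_{N-1}>0$). The only place where anything could go wrong is this last step: it genuinely needs $h_0,\dots,h_{i-1}\ne0$ (equivalently $\tau_1,\dots,\tau_i\ne0$), which is exactly what positive definiteness of ${\mathscr L}$ supplies, so no extra hypothesis is required; the rest is routine determinant manipulation.
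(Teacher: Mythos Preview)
Your argument is correct and is the standard textbook proof of this determinantal representation: verify well-definedness via $\tau_i>0$, check monicity by cofactor expansion along the last row, obtain the orthogonality by replacing the last row with a row of moments (producing a repeated row), and finish with uniqueness from nondegeneracy of the bilinear form. Note that the paper does not supply its own proof of this proposition; it is stated with a reference to Chihara's monograph, so there is no in-paper argument to compare against, and your proof is precisely the one found there.
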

We denote distinct roots of the orthogonal polynomial $\phi_N(z)$ by $\lambda_1,\lambda_2,\ldots,\lambda_N$, namely:
\[\phi_N(z)=(z-\lambda_1)(z-\lambda_2)\cdots(z-\lambda_N).\]
Using the Gauss quadrature formula~\cite{Chihara_1978}, we obtain the following proposition concerning the positive definite $\mathscr{L}$ in terms of the roots $\lambda_1,\lambda_2,\dots,\lambda_N$. 
\begin{proposition}[cf.~\cite{Chihara_1978}]\label{prop:Gauss_quadrature_formula}
For all polynomials $P(z)$ of degree at most $2N-1$:
\[\mathscr{L}[P(z)]=\sum_{i=1}^Nc_iP(\lambda_i),\]
where $c_i$ are positive constants given by:
\[c_i\coloneqq\mathscr{L}\left[ 
\frac{\phi_N(z)}{\displaystyle \frac{d \phi_N(z)}{dz}
\Big\vert_{z=\lambda_i}(z-\lambda_i)}
\right],\quad i=1,2,\ldots,N.\]
\end{proposition}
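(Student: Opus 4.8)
The plan is to combine Euclidean division by $\phi_N(z)$ with the orthogonality relations defining $\mathscr{L}$, and then to use Lagrange interpolation at the roots $\lambda_1,\dots,\lambda_N$. First I would fix a polynomial $P(z)$ with $\deg P\le 2N-1$ and write $P(z)=Q(z)\phi_N(z)+R(z)$ with $\deg R\le N-1$; since $\deg\bigl(Q(z)\phi_N(z)\bigr)=\deg Q+N\le 2N-1$, the quotient also satisfies $\deg Q\le N-1$. Applying $\mathscr{L}$ and expanding $Q(z)$ in monomials, the relation $\mathscr{L}[z^j\phi_N(z)]=0$ (which holds for every $j\ge0$) gives $\mathscr{L}[Q(z)\phi_N(z)]=0$, so $\mathscr{L}[P(z)]=\mathscr{L}[R(z)]$.

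Next, since $\phi_N(\lambda_i)=0$ we have $P(\lambda_i)=R(\lambda_i)$ for each $i$. As the $\lambda_i$ are $N$ distinct numbers and $\deg R\le N-1$, Lagrange interpolation at $\lambda_1,\dots,\lambda_N$ yields $R(z)=\sum_{i=1}^N R(\lambda_i)\,\ell_i(z)$, where $\ell_i(z)=\prod_{j\ne i}\frac{z-\lambda_j}{\lambda_i-\lambda_j}=\frac{\phi_N(z)}{\phi_N'(\lambda_i)(z-\lambda_i)}$; each $\ell_i$ is a genuine polynomial of degree $N-1$ because the roots are simple, so $\phi_N'(\lambda_i)\ne0$. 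Applying $\mathscr{L}$ and noting that $\mathscr{L}[\ell_i(z)]=c_i$, I obtain $\mathscr{L}[P(z)]=\mathscr{L}[R(z)]=\sum_{i=1}^N R(\lambda_i)c_i=\sum_{i=1}^N P(\lambda_i)c_i$, which is the asserted formula.

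For the positivity of $c_i$, I would apply the identity just established to $P(z)=\ell_k(z)^2$, whose degree is $2N-2\le 2N-1$; since $\ell_k(\lambda_i)=\delta_{k,i}$, this gives $c_k=\mathscr{L}[\ell_k(z)^2]$. It then remains to see that $\mathscr{L}[\ell_k(z)^2]>0$. Writing $\ell_k(z)=\sum_{m=0}^{N-1}r_m\phi_m(z)$ with $r_{N-1}\ne0$ and using $\mathscr{L}[\phi_m(z)\phi_n(z)]=h_m\delta_{m,n}$ for $m,n\le N-1$ (immediate from the defining relations and the monicity of the $\phi_i$), one finds $\mathscr{L}[\ell_k(z)^2]=\sum_{m=0}^{N-1}r_m^2 h_m>0$, since $h_0,\dots,h_{N-1}>0$.

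This argument is classical, so I do not expect a serious obstacle. The one point demanding care is positivity: $\mathscr{L}$ is not positive definite on all of $\mathbb{R}[z]$ --- indeed $\mathscr{L}[\phi_N(z)^2]=0$ --- so for the degree-$(2N-2)$ polynomial $\ell_k^2$ one cannot merely quote positive definiteness; what is actually used is that $\mathscr{L}$ is positive on squares of polynomials of degree at most $N-1$, which is exactly what the expansion above exploits. I would also make explicit the standard fact, already implicit in the notation $\phi_N(z)=(z-\lambda_1)\cdots(z-\lambda_N)$, that the $\lambda_i$ are real and simple, since this underlies both the well-definedness of the $\ell_i$ and the reality of the coefficients $r_m$.
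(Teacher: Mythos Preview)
Your proof is correct and is precisely the classical argument one finds in the cited reference; the paper itself gives no proof of this proposition, merely citing Chihara, so there is nothing to compare against beyond noting that your Euclidean-division-plus-Lagrange-interpolation approach, together with the positivity argument via expansion in the $\phi_m$, is the standard textbook route. Your remarks on the subtlety of positive definiteness (that $\mathscr{L}$ is only positive on squares of polynomials of degree $\le N-1$) and on the implicit use of the reality and simplicity of the $\lambda_i$ are well taken and worth making explicit.
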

\noindent
Proposition \ref{prop:Gauss_quadrature_formula} implies that we can uniquely determine the values of $\mu_1,\mu_2,\dots,\mu_{2N-1}$ from those of $\lambda_1,\lambda_2,\ldots,\lambda_N$ and $c_1,c_2,\ldots,c_N$. 

Now, we introduce two discrete-time parameters, $s$ and $n$, into the linear functional:
\begin{equation*}
\begin{aligned}
&{\mathscr L}^{(s+1,n)}[\cdot]\coloneqq{\mathscr L}^{(s,n)}[z\cdot],\\
&{\mathscr L}^{(s,n+1)}[\cdot]\coloneqq{\mathscr L}^{(s,n)}[(z+\delta^{(n)})\cdot],
\end{aligned}
\end{equation*}
where ${\mathcal L}^{(0,0)}\coloneqq{\mathcal L}$, and $\delta^{(n)}$ are arbitrary parameters. 
The discrete-time evolution from $n$ to $n+1$ differs from the discrete-time evolution from $s$ to $s+1$ in that it depends on the value of $\delta^{(n)}$. 
We consider the moment sequence $\{\mu_i^{(s,n)}\}$ with respect to ${\mathscr L}^{(s,n)}$ as:
\[\mu_i^{(s,n)}\coloneqq{\mathscr L}^{(s,n)}[x^i],\quad i=0,1,\ldots.\]
Then, the moments $\mu_i^{(s,n)}$ satisfy the two kinds of discrete-time evolutions:
\begin{equation}\label{eqn:moment_discrete_time_evoltion}
\begin{aligned}
&\mu_i^{(s+1,n)}=\mu_{i+1}^{(s,n)},\quad i=0,1,\ldots,\\
&\mu_i^{(s,n+1)}=\mu_{i+1}^{(s,n)}+\delta^{(n)}\mu_i^{(s,n)},\quad i=0,1,\ldots.
\end{aligned}
\end{equation}
We emphasize here that the moment sequence $\{\mu_i^{(s,n)}\}$ is equivalent to that appearing in Maeda et al.~\cite{Maeda_2013}. That is, under the assumption that the Hankel determinants $\tau_i^{(s,n)}\coloneqq\vert\mu_{j+k}^{(s,n)}\vert_{j,k=0}^{i-1}$ are all nonzero, we can express the orthogonal polynomials $\phi_i^{(s,n)}$:
\begin{equation}\label{eqn:phi_determinant_new}
\begin{aligned}
&\phi_{-1}^{(s,n)}(z)\coloneqq0,\quad \phi_0^{(s,n)}(z)\coloneqq1,\\
&\phi_i^{(s,n)}(z)=\frac{1}{\tau_i^{(s,n)}}
\left\vert\begin{array}{ccccc}
\mu_0^{(s,n)} & \mu_1^{(s,n)} & \cdots & \mu_{i-1}^{(s,n)} & \mu_i^{(s,n)} \\
\mu_1^{(s,n)} & \mu_2^{(s,n)} & \cdots & \mu_i^{(s,n)} & \mu_{i+1}^{(s,n)} \\
\vdots & \vdots & \ddots & \vdots & \vdots \\
\mu_{i-1}^{(s,n)} & \mu_i^{(s,n)} & \cdots & \mu_{2i-2}^{(s,n)} & \mu_{2i-1}^{(s,n)} \\
1 & z & \cdots & z^{i-1} & z^i
\end{array}\right\vert,\quad i=1,2,\dots,N.
\end{aligned}
\end{equation}
Moreover, we can derive the discrete-time evolutions from $s$ to $s+1$ and from $n$ to $n+1$ in the orthogonal polynomials $\phi_i^{(s,n)}(z)$ and their inverses. 
\begin{proposition}[cf.~\cite{Maeda_2013}]\label{prop:spectral_transformation}
Assume that $\tau_i^{(s,n)}\ne0$ for $i=1,2,\ldots,N.$ Then, the orthogonal polynomials $\phi_i^{(s,n)}(z)$ satisfy:
\begin{equation}\label{eqn:LR}
\left\{\begin{aligned}
&z\phi_{i-1}^{(s+1,n)}(z)=\phi_i^{(s,n)}(z)+Q_i^{(s,n)}\phi_{i-1}^{(s,n)}(z),\quad i=1,2,\dots,N,\\
&\phi_i^{(s,n)}(z)=\phi_i^{(s+1,n)}(z)+E_i^{(s,n)}\phi_{i-1}^{(s+1,n)}(z),\quad i=0,1,\dots,N,
\end{aligned}\right.
\end{equation}
and:
\begin{equation}\label{eqn:shiftedLR}
\left\{\begin{aligned}
&(z+\delta^{(n)})\phi_{i-1}^{(s,n+1)}(z)=\phi_i^{(s,n)}(z)+\bar{Q}_i^{(s,n)}\phi_{i-1}^{(s,n)}(z),\quad i=1,2,\dots,N,\\
&\phi_i^{(s,n)}(z)=\phi_i^{(s,n+1)}(z)+\bar{E}_i^{(s,n)}\phi_{i-1}^{(s,n+1)}(z),\quad i=0,1,\dots,N, 
\end{aligned}\right.
\end{equation}
where:
\begin{equation}\label{eqn:determinantal_solution_QE}
\left\{\begin{aligned}
&Q_i^{(s,n)}\coloneqq\frac{\tau_{i-1}^{(s,n)}\tau_i^{(s+1,n)}}{\tau_i^{(s,n)}\tau_{i-1}^{(s+1,n)}},\quad\bar{Q}_i^{(s,n)}\coloneqq\frac{\tau_{i-1}^{(s,n)}\tau_i^{(s,n+1)}}{\tau_i^{(s,n)}\tau_{i-1}^{(s,n+1)}},\quad i=1,2,\dots,N,\\
&E_i^{(s,n)}\coloneqq\frac{\tau_{i+1}^{(s,n)}\tau_{i-1}^{(s+1,n)}}{\tau_i^{(s,n)}\tau_i^{(s+1,n)}},\quad\bar{E}_i^{(s,n)}\coloneqq\frac{\tau_{i+1}^{(s,n)}\tau_{i-1}^{(s,n+1)}}{\tau_i^{(s,n)}\tau_i^{(s,n+1)}},\quad i=0,1,\dots,N.
\end{aligned}\right.
\end{equation}
\end{proposition}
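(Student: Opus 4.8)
The plan is to establish \eqref{eqn:LR} and \eqref{eqn:shiftedLR} in parallel: for each of the four relations I first verify that its right-hand side really has the stated two-term shape, then identify the scalar coefficient with a ratio of the normalization constants $h_i^{(s,n)}:=\mathscr{L}^{(s,n)}[z^i\phi_i^{(s,n)}(z)]$, and finally convert these ratios into the Hankel-determinant form \eqref{eqn:determinantal_solution_QE}.

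\emph{Shape of the relations.} Since $\phi_{i-1}^{(s+1,n)}(z)$ is monic of degree $i-1$, both $z\phi_{i-1}^{(s+1,n)}(z)$ and $\phi_i^{(s,n)}(z)$ are monic of degree $i$, hence their difference lies in the span of $\phi_0^{(s,n)},\dots,\phi_{i-1}^{(s,n)}$ and can be written as $\sum_{k=0}^{i-1}c_k\phi_k^{(s,n)}(z)$. Pairing with $\phi_m^{(s,n)}(z)$ under $\mathscr{L}^{(s,n)}$ for $m=0,1,\dots,i-2$ and using the moment-shift identity $\mathscr{L}^{(s+1,n)}[\,\cdot\,]=\mathscr{L}^{(s,n)}[z\,\cdot\,]$, the left side becomes $\mathscr{L}^{(s+1,n)}[\phi_m^{(s,n)}(z)\phi_{i-1}^{(s+1,n)}(z)]=0$ (because $\deg\phi_m^{(s,n)}<i-1$) and the right side becomes $c_m h_m^{(s,n)}$ (because $\mathscr{L}^{(s,n)}[\phi_m^{(s,n)}\phi_i^{(s,n)}]=0$ as well), so $c_m=0$ for $m\le i-2$; only the top coefficient $c_{i-1}$, which we denote $Q_i^{(s,n)}$, survives. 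The second line of \eqref{eqn:LR} is obtained the same way, now expanding $\phi_i^{(s,n)}$ in the basis $\{\phi_k^{(s+1,n)}\}$ and using $\mathscr{L}^{(s+1,n)}[\phi_m^{(s+1,n)}\phi_i^{(s,n)}]=\mathscr{L}^{(s,n)}[z\phi_m^{(s+1,n)}\phi_i^{(s,n)}]=0$ for $m\le i-2$; and both lines of \eqref{eqn:shiftedLR} follow after replacing $\mathscr{L}^{(s,n)}[z\,\cdot\,]$ by $\mathscr{L}^{(s,n)}[(z+\delta^{(n)})\,\cdot\,]=\mathscr{L}^{(s,n+1)}[\,\cdot\,]$ and $\phi^{(s+1,n)}$ by $\phi^{(s,n+1)}$ throughout. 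The cases $i=0$ are vacuous since $\phi_{-1}\equiv0$, which matches $E_0^{(s,n)}=\bar E_0^{(s,n)}=0$ produced by $\tau_{-1}=0$ in \eqref{eqn:determinantal_solution_QE}.

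\emph{The coefficients.} Apply $\mathscr{L}^{(s,n)}[z^{i-1}\,\cdot\,]$ to the first line of \eqref{eqn:LR}: its left side is $\mathscr{L}^{(s,n)}[z^i\phi_{i-1}^{(s+1,n)}]=\mathscr{L}^{(s+1,n)}[z^{i-1}\phi_{i-1}^{(s+1,n)}]=h_{i-1}^{(s+1,n)}$, while its right side is $Q_i^{(s,n)}h_{i-1}^{(s,n)}$ (the $\phi_i^{(s,n)}$ term vanishes by orthogonality), so $Q_i^{(s,n)}=h_{i-1}^{(s+1,n)}/h_{i-1}^{(s,n)}$. Apply $\mathscr{L}^{(s+1,n)}[z^{i-1}\,\cdot\,]$ to the second line of \eqref{eqn:LR}: its left side is $\mathscr{L}^{(s,n)}[z^i\phi_i^{(s,n)}]=h_i^{(s,n)}$ and its right side is $E_i^{(s,n)}h_{i-1}^{(s+1,n)}$, so $E_i^{(s,n)}=h_i^{(s,n)}/h_{i-1}^{(s+1,n)}$. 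The analogous computations for \eqref{eqn:shiftedLR}, using $\mathscr{L}^{(s,n)}[(z+\delta^{(n)})z^{i-1}g]=\mathscr{L}^{(s,n+1)}[z^{i-1}g]$ and again $\mathscr{L}^{(s,n)}[z^{i-1}\phi_i^{(s,n)}]=0$, give $\bar Q_i^{(s,n)}=h_{i-1}^{(s,n+1)}/h_{i-1}^{(s,n)}$ and $\bar E_i^{(s,n)}=h_i^{(s,n)}/h_{i-1}^{(s,n+1)}$. Finally, substituting the standard evaluation $h_i^{(s,n)}=\tau_{i+1}^{(s,n)}/\tau_i^{(s,n)}$ — obtained by expanding $\mathscr{L}^{(s,n)}[z^i\phi_i^{(s,n)}(z)]$ along the last row of the determinant \eqref{eqn:phi_determinant_new} — into these four ratios reproduces exactly \eqref{eqn:determinantal_solution_QE}.

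There is no deep obstacle; the work is careful bookkeeping of which functional $\mathscr{L}^{(s,n)}$, $\mathscr{L}^{(s+1,n)}$ or $\mathscr{L}^{(s,n+1)}$ acts at each step, and of the index ranges so that every orthogonality relation invoked stays within the admissible degrees of the polynomials involved — including the boundary index $i=N$, where $\phi_N^{(s,n)}$ is actually independent of $s$ and $n$ and $h_N^{(s,n)}=\tau_{N+1}^{(s,n)}/\tau_N^{(s,n)}=0$, consistent with $E_N^{(s,n)}=\bar E_N^{(s,n)}=0$. The one genuine point to shore up is that the divisions above, and the denominators in \eqref{eqn:determinantal_solution_QE}, tacitly require $\tau_i^{(s+1,n)}\ne0$ and $\tau_i^{(s,n+1)}\ne0$ in addition to the stated hypothesis $\tau_i^{(s,n)}\ne0$; these follow from \eqref{eqn:moment_discrete_time_evoltion} together with the nonvanishing of the roots $\lambda_1,\dots,\lambda_N$ on the range of discrete times considered, and I would state them explicitly among the hypotheses.
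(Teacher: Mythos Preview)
The paper does not supply its own proof of this proposition: it is stated with a citation to Maeda--Tsujimoto~\cite{Maeda_2013} and followed only by the remark identifying \eqref{eqn:LR} and \eqref{eqn:shiftedLR} with the Christoffel--Darboux and Geronimus transformations. Your argument is correct and is precisely the standard derivation of those transformations---expand the degree-matched difference in the orthogonal basis, kill all but the top coefficient by orthogonality, then read off that coefficient via $h_i^{(s,n)}=\tau_{i+1}^{(s,n)}/\tau_i^{(s,n)}$---so there is nothing to compare beyond noting that you have written out what the paper leaves to the reference. Your closing remark about needing $\tau_i^{(s+1,n)}\ne0$ and $\tau_i^{(s,n+1)}\ne0$ is well taken; the paper's hypothesis ``$\tau_i^{(s,n)}\ne0$'' is most naturally read as holding for all discrete times in play, which covers this, but making it explicit does no harm.
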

\noindent
Except for the inclusion of discrete-time $n$, the first and second equations of \eqref{eqn:LR} are the Christoffel-Darboux and Geronimus transformations, respectively. 
Of course, in the absence of the discrete-time $s$, the first and second equations of \eqref{eqn:shiftedLR} are also the Christoffel-Darboux and Geronimus transformations, respectively. 
Equation \eqref{eqn:phi_determinant_new} with the boundary condition $\mathscr{L}^{(s,n)}[z^N\phi_N^{(s,n)}(z)]$ $=0$ immediately leads to $\tau_{N+1}^{(s,n)}/\tau_N^{(s,n)}=0$. Thus, it follows from \eqref{eqn:determinantal_solution_QE} that $E_N^{(s,n)}=0$ and $\bar{E}_N^{(s,n)}=0$. Combining these with the second equations of \eqref{eqn:LR} and \eqref{eqn:shiftedLR}, we observe that $\phi_N^{(s,n)}(z)=\phi_N^{(0,0)}(z)$. 
Thus, equations \eqref{eqn:LR} and \eqref{eqn:shiftedLR} are spectral transformations that do not change the values of $\lambda_1,\lambda_2,\ldots,\lambda_N$.

Equations \eqref{eqn:LR} and \eqref{eqn:shiftedLR} yield relationships of $\phi_i^{(s,n)}(z)$ to $\phi_{i-1}^{(s,n)}(z)$ and $\phi_{i-2}^{(s,n)}(z)$. 
\begin{lemma}[cf.~\cite{Maeda_2013}]\label{lem:compatibility_conditions}
The orthogonal polynomials $\phi_i^{(s,n)}$ satisfy the following three-term relationship:
\begin{equation}\label{eqn:orthogonal_polynomial}
z\phi_{i-1}^{(s,n)}(z)=\phi_i^{(s,n)}(z)+a_i^{(s,n)}\phi_{i-1}^{(s,n)}(z)+(b_{i-1}^{(s,n)})^2\phi_{i-2}^{(s,n)}(z),\quad i=1,2,\ldots,N,
\end{equation}
where $a_i^{(s,n)}$ and $b_i^{(s,n)}$ are given by:
\begin{eqnarray}
\label{eqn:compatibility_conditions_diag}
&a_i^{(s,n)}\coloneqq
\left\{\begin{aligned}
&Q_i^{(s,n)}+E_{i-1}^{(s,n)},\\
&Q_i^{(s-1,n)}+E_i^{(s-1,n)},\\
&\bar{Q}_i^{(s,n)}+\bar{E}_{i-1}^{(s,n)}-\delta^{(n)},\\
&\bar{Q}_i^{(s,n-1)}+\bar{E}_i^{(s,n-1)}-\delta^{(n-1)},
\end{aligned}\right.\quad i=1,2,\dots,N,
\\
\label{eqn:compatibility_conditions_subdiag}
&b_0^{(s,n)}\coloneqq0,\quad(b_i^{(s,n)})^2\coloneqq
\left\{\begin{aligned}
&Q_i^{(s,n)}E_i^{(s,n)},\\
&Q_{i+1}^{(s-1,n)}E_i^{(s-1,n)},\\
&\bar{Q}_i^{(s,n)}\bar{E}_i^{(s,n)},\\
&\bar{Q}_{i+1}^{(s,n-1)}\bar{E}_i^{(s,n-1)},
\end{aligned}\right.\quad i=1,2,\dots,N-1.
\end{eqnarray}
\end{lemma}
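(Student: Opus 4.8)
The plan is to derive the three-term relation \eqref{eqn:orthogonal_polynomial} by eliminating the ``intermediate'' orthogonal polynomials from the two-step factorizations of Proposition \ref{prop:spectral_transformation}, and then to exploit the uniqueness of the coefficients in such a relation to conclude that the four expressions in \eqref{eqn:compatibility_conditions_diag} and the four in \eqref{eqn:compatibility_conditions_subdiag} all coincide.

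First I would work with \eqref{eqn:LR}. Using the second (Geronimus-type) equation at index $i-1$ to write $\phi_{i-1}^{(s,n)}(z)=\phi_{i-1}^{(s+1,n)}(z)+E_{i-1}^{(s,n)}\phi_{i-2}^{(s+1,n)}(z)$, multiplying through by $z$, and then substituting the first (Christoffel--Darboux-type) equation at indices $i$ and $i-1$ to replace $z\phi_{i-1}^{(s+1,n)}(z)$ and $z\phi_{i-2}^{(s+1,n)}(z)$, one obtains
\[
z\phi_{i-1}^{(s,n)}(z)=\phi_i^{(s,n)}(z)+\bigl(Q_i^{(s,n)}+E_{i-1}^{(s,n)}\bigr)\phi_{i-1}^{(s,n)}(z)+Q_{i-1}^{(s,n)}E_{i-1}^{(s,n)}\phi_{i-2}^{(s,n)}(z),
\]
which is \eqref{eqn:orthogonal_polynomial} with the first choice $a_i^{(s,n)}=Q_i^{(s,n)}+E_{i-1}^{(s,n)}$ and $(b_i^{(s,n)})^2=Q_i^{(s,n)}E_i^{(s,n)}$ (after the shift $i\mapsto i+1$ in the last coefficient). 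Carrying out the analogous elimination but one step earlier in $s$ --- that is, applying \eqref{eqn:LR} at $(s-1,n)$, where the Christoffel--Darboux equation at index $i$ already has $z\phi_{i-1}^{(s,n)}(z)$ on its left-hand side, and then using the Geronimus equation at $(s-1,n)$ to rewrite $\phi_i^{(s-1,n)}(z)$ and $\phi_{i-1}^{(s-1,n)}(z)$ in terms of $\phi^{(s,n)}$ --- yields the second pair, $a_i^{(s,n)}=Q_i^{(s-1,n)}+E_i^{(s-1,n)}$ and $(b_i^{(s,n)})^2=Q_{i+1}^{(s-1,n)}E_i^{(s-1,n)}$. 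Repeating both computations verbatim with \eqref{eqn:shiftedLR} in place of \eqref{eqn:LR} produces a three-term relation for $(z+\delta^{(n)})\phi_{i-1}^{(s,n)}(z)$; subtracting $\delta^{(n)}\phi_{i-1}^{(s,n)}(z)$ (resp.\ $\delta^{(n-1)}\phi_{i-1}^{(s,n)}(z)$ when working at $(s,n-1)$) from both sides gives the third and fourth expressions, the $-\delta^{(n)}$ and $-\delta^{(n-1)}$ terms arising precisely from this subtraction.

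Finally, since $\phi_i^{(s,n)}(z),\phi_{i-1}^{(s,n)}(z),\phi_{i-2}^{(s,n)}(z)$ are monic of degrees $i,i-1,i-2$ and hence linearly independent, the expansion of $z\phi_{i-1}^{(s,n)}(z)$ in this basis is unique; consequently all four derived values of $a_i^{(s,n)}$ must agree, and likewise all four values of $(b_i^{(s,n)})^2$. The only care required is index bookkeeping: shifting the indices of \eqref{eqn:LR} and \eqref{eqn:shiftedLR} by $\pm1$ in $s$, $n$, and $i$ is legitimate by Proposition \ref{prop:spectral_transformation} applied at the neighbouring discrete times (under the standing assumption that the relevant Hankel determinants are nonzero), and the boundary cases $i=1$ and $i=N$ have to be checked against the conventions $\phi_{-1}^{(s,n)}\equiv0$, $b_0^{(s,n)}=0$, and $E_N^{(s,n)}=\bar E_N^{(s,n)}=0$ recorded after Proposition \ref{prop:spectral_transformation} --- these make the $\phi_{i-2}$ term disappear at $i=1$ and keep $\phi_N^{(s,n)}=\phi_N^{(0,0)}$ consistent at $i=N$. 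I expect this bookkeeping, rather than any analytic difficulty, to be the main (and essentially the only) obstacle.
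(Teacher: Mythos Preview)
Your proposal is correct and follows exactly the approach the paper indicates (the sentence preceding the lemma states that ``Equations \eqref{eqn:LR} and \eqref{eqn:shiftedLR} yield relationships of $\phi_i^{(s,n)}(z)$ to $\phi_{i-1}^{(s,n)}(z)$ and $\phi_{i-2}^{(s,n)}(z)$''), though the paper itself gives no proof and defers to \cite{Maeda_2013}. Your elimination of the intermediate $\phi^{(s\pm1,n)}$ and $\phi^{(s,n\pm1)}$ polynomials, together with the uniqueness argument from linear independence, is precisely the standard derivation and is complete as written.
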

\noindent
Focusing on the first and second equations of \eqref{eqn:compatibility_conditions_diag} and \eqref{eqn:compatibility_conditions_subdiag}, we can obtain the discrete-time evolution from $s-1$ to $s$ of $Q_i^{(s,n)}$ and $E_i^{(s,n)}$. This is equivalent to that in the famous dToda equation. A nonautonomous version involving the parameters $\delta^{(n)}$ of the dToda equation can be derived from the third and fourth equations of \eqref{eqn:compatibility_conditions_diag} and \eqref{eqn:compatibility_conditions_subdiag}. 

Considering that $\bar{Q}_i^{(s,n)}$ and $\bar{E}_i^{(s,n)}$ are auxiliary variables in Lemma \ref{lem:compatibility_conditions}, we thus have recursion formulas that generate two-direction discrete-time evolutions.  
\begin{theorem}\label{thm:compatibility_condition_QE}
A combination of the discrete-time evolutions from $s$ to $s-1$ and from $n$ to $n+1$ of  the variables $Q_i^{(s,n)}$ and $E_i^{(s,n)}$ is given by:
\begin{equation}\label{eqn:compatibility_condition_1}
\left\{\begin{aligned}
&\bar{Q}_i^{(s,n)}+\bar{E}_{i-1}^{(s,n)}-\delta^{(n)}=Q_i^{(s,n)}+E_{i-1}^{(s,n)},\quad i=1,2,\dots,N,\\
&\bar{Q}_i^{(s,n)}\bar{E}_i^{(s,n)}=Q_i^{(s,n)}E_i^{(s,n)},\quad i=1,2,\dots,N-1,
\end{aligned}\right.
\end{equation}
and:
\begin{equation}\label{eqn:compatibility_condition_2}
\left\{\begin{aligned}
&Q_i^{(s-1,n+1)}+E_i^{(s-1,n+1)}=\bar{Q}_i^{(s,n)}+\bar{E}_i^{(s,n)}-\delta^{(n)},\quad i=1,2,\dots,N,\\
&Q_{i+1}^{(s-1,n+1)}E_i^{(s-1,n+1)}=\bar{Q}_{i+1}^{(s,n)}\bar{E}_i^{(s,n)},\quad i=1,2,\dots,N-1,
\end{aligned}\right.
\end{equation}
where $E_0^{(s,n)}\coloneqq0,E_N^{(s,n)}\coloneqq0$ and $\bar{E}_0^{(s,n)}\coloneqq0,\bar{E}_N^{(s,n)}\coloneqq0$. 
\end{theorem}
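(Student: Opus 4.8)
The plan is to obtain both systems directly from Lemma~\ref{lem:compatibility_conditions} by identifying its several expressions for the tridiagonal recurrence coefficients $a_i^{(s,n)}$ and $(b_i^{(s,n)})^2$, evaluated at suitably shifted discrete times; no new construction is needed, since \eqref{eqn:compatibility_condition_1} and \eqref{eqn:compatibility_condition_2} are simply the compatibility conditions concealed in the four-fold representations \eqref{eqn:compatibility_conditions_diag} and \eqref{eqn:compatibility_conditions_subdiag}.

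First I would derive \eqref{eqn:compatibility_condition_1}, keeping the discrete time at $(s,n)$. Because the first and third lines of \eqref{eqn:compatibility_conditions_diag} both equal $a_i^{(s,n)}$, equating them yields $Q_i^{(s,n)}+E_{i-1}^{(s,n)}=\bar{Q}_i^{(s,n)}+\bar{E}_{i-1}^{(s,n)}-\delta^{(n)}$ for $i=1,\dots,N$, which is the first equation of \eqref{eqn:compatibility_condition_1}; likewise the first and third lines of \eqref{eqn:compatibility_conditions_subdiag}, both equal to $(b_i^{(s,n)})^2$, give $Q_i^{(s,n)}E_i^{(s,n)}=\bar{Q}_i^{(s,n)}\bar{E}_i^{(s,n)}$ for $i=1,\dots,N-1$. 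To close the recursion I would record the boundary values $E_0^{(s,n)}=\bar{E}_0^{(s,n)}=0$ and $E_N^{(s,n)}=\bar{E}_N^{(s,n)}=0$, which follow from \eqref{eqn:determinantal_solution_QE}: the first pair because each numerator contains the factor $\tau_{-1}=0$, and the second pair because $\tau_{N+1}^{(s,n)}=0$, as recorded just after Proposition~\ref{prop:spectral_transformation}.

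Next I would derive \eqref{eqn:compatibility_condition_2} by the same device, but applying Lemma~\ref{lem:compatibility_conditions} at the shifted time $(s,n+1)$ instead of $(s,n)$. Its second line of \eqref{eqn:compatibility_conditions_diag} then reads $a_i^{(s,n+1)}=Q_i^{(s-1,n+1)}+E_i^{(s-1,n+1)}$ and its fourth line reads $a_i^{(s,n+1)}=\bar{Q}_i^{(s,n)}+\bar{E}_i^{(s,n)}-\delta^{(n)}$ (the arguments $(s,(n+1)-1)$ and $\delta^{((n+1)-1)}$ collapsing to $(s,n)$ and $\delta^{(n)}$), and equating the two is precisely the first equation of \eqref{eqn:compatibility_condition_2}; feeding $(s,n+1)$ into \eqref{eqn:compatibility_conditions_subdiag} and comparing its second line $Q_{i+1}^{(s-1,n+1)}E_i^{(s-1,n+1)}$ with its fourth line $\bar{Q}_{i+1}^{(s,n)}\bar{E}_i^{(s,n)}$, both equal to $(b_i^{(s,n+1)})^2$, gives the second equation.

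I do not expect a genuine analytic obstacle, as the whole argument is an algebraic identification of quantities already supplied by Lemma~\ref{lem:compatibility_conditions}. The points needing care are purely bookkeeping: choosing the correct one of the four expressions in \eqref{eqn:compatibility_conditions_diag}--\eqref{eqn:compatibility_conditions_subdiag} and the correct shifted argument in each identity, and noting that the non-degeneracy hypothesis $\tau_i^{(\cdot,\cdot)}\ne0$ of Lemma~\ref{lem:compatibility_conditions} must also be assumed at the shifted times $(s-1,n+1)$ and $(s,n+1)$, so that the relevant expressions for $a_i$ and $(b_i)^2$ are legitimately available there. That standing genericity assumption, under which all the $LR$-type recursions of the paper hold, is the only thing that could really obstruct the argument.
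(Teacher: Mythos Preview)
Your proposal is correct and is exactly the argument the paper has in mind: the paper states Theorem~\ref{thm:compatibility_condition_QE} without a separate proof block, treating it as an immediate consequence of equating the four alternative expressions for $a_i^{(s,n)}$ and $(b_i^{(s,n)})^2$ in Lemma~\ref{lem:compatibility_conditions}, just as you do (first with third at $(s,n)$ for \eqref{eqn:compatibility_condition_1}, second with fourth at $(s,n+1)$ for \eqref{eqn:compatibility_condition_2}). Your handling of the boundary values $E_0,\bar{E}_0,E_N,\bar{E}_N$ via $\tau_{-1}=0$ and $\tau_{N+1}^{(s,n)}/\tau_N^{(s,n)}=0$ is also consistent with the paper's remarks following Proposition~\ref{prop:spectral_transformation}.
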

\noindent
Theorem \ref{thm:compatibility_condition_QE}, specifically its of $Q_i^{(s,n)},E_i^{(s,n)},\bar{Q}_i^{(s,n)}$ and $\bar{E}_i^{(s,n)}$, plays a key role in time-discretizing the peakon equation \eqref{eqn:peakon} in the following section. 
%
%
\section{$\mbox{\boldmath $LR$}$-like transformations and the discrete peakon equation}
\label{sec:4}
%
In this section, we clarify $LR$ transformations related to \eqref{eqn:compatibility_condition_1} and \eqref{eqn:compatibility_condition_2}, and then, using $LR$ transformations, we find a time-discretization version of the peakon equation \eqref{eqn:peakon}. 

For simplicity, we set $Q_i^{(n)}\coloneqq Q_i^{(-n,n)},E_i^{(n)}\coloneqq E_i^{(-n,n)},\bar{Q}_i^{(n)}\coloneqq\bar{Q}_i^{(-n,n)}$ and $\bar{E}_i^{(n)}\coloneqq\bar{E}_i^{(-n,n)}$. 
We assume that $\tau_i^{(s,n)}>0$ for $i=1,2,\ldots,N$ in \eqref{eqn:determinantal_solution_QE}. From this assumption, it is obvious that $Q_i^{(n)}>0,E_i^{(n)}>0,\bar{Q}_i^{(n)}>0$ and $\bar{E}_i^{(n)}>0$. We prepare $N$-by-$N$ upper and lower bidiagonal matrices:
\begin{equation*}
\begin{aligned}
&L^{(n)}\coloneqq 
\left( \begin{array}{cccc}
\sqrt{Q_1^{(n)}} & \\  -\sqrt{E_1^{(n)}} & \sqrt{Q_2^{(n)}} & \\
& \ddots & \ddots & \\ & & -\sqrt{E_{N-1}^{(n)}} & \sqrt{Q_N^{(n)}}
\end{array} \right),\quad R^{(n)}\coloneqq(L^{(n)})^\top,\\
&\bar{L}^{(n)}\coloneqq 
\left( \begin{array}{cccc}
\sqrt{\bar{Q}_1^{(n)}} & \\ -\sqrt{\bar{E}_1^{(n)}} & \sqrt{\bar{Q}_2^{(n)}} & \\
 & \ddots & \ddots & \\ & &  -\sqrt{\bar{E}_{N-1}^{(n)}} & \sqrt{\bar{Q}_N^{(n)}}
\end{array} \right),\quad \bar{R}^{(n)}\coloneqq (\bar{L}^{(n)})^\top.
\end{aligned}
\end{equation*}
This allows us to determine matrix representations of \eqref{eqn:compatibility_condition_1} and \eqref{eqn:compatibility_condition_2} as per the following Lemma. 
\begin{lemma}\label{lem:compatibility_condition_matrix}
Equations \eqref{eqn:compatibility_condition_1} and \eqref{eqn:compatibility_condition_2} can be expressed in matrix-form as:
\begin{equation}\label{eqn:compatibility_condition_matrix}
\left\{\begin{aligned}
&\bar{L}^{(n)}\bar{R}^{(n)}-\delta^{(n)}I=L^{(n)}R^{(n)},\\
&R^{(n+1)}L^{(n+1)}=\bar{R}^{(n)}\bar{L}^{(n)}-\delta^{(n)}I,
\end{aligned}\right.
\end{equation}
where $I$ is the $N$-by-$N$ identity matrix. 
\end{lemma}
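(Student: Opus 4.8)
The plan is to verify the two matrix identities in \eqref{eqn:compatibility_condition_matrix} entrywise, by computing the products of the bidiagonal matrices explicitly and matching the diagonal and subdiagonal entries against the scalar relations \eqref{eqn:compatibility_condition_1} and \eqref{eqn:compatibility_condition_2}. The starting observation is that both sides of each identity are symmetric tridiagonal matrices: $L^{(n)}R^{(n)} = L^{(n)}(L^{(n)})^\top$ is symmetric tridiagonal because $L^{(n)}$ is lower bidiagonal, and likewise for $\bar L^{(n)}\bar R^{(n)}$, $R^{(n+1)}L^{(n+1)}$, $\bar R^{(n)}\bar L^{(n)}$; subtracting $\delta^{(n)}I$ preserves this structure. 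So it suffices to check equality of the $(i,i)$ entries and of the $(i,i+1)$ entries (the $(i+1,i)$ entries then follow by symmetry, and all other entries vanish on both sides).

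First I would record the entries of the relevant products. For $L^{(n)}R^{(n)}$ a direct multiplication gives diagonal entries $Q_i^{(n)} + E_{i-1}^{(n)}$ (with the convention $E_0^{(n)}=0$) and subdiagonal entries $-\sqrt{E_i^{(n)}Q_i^{(n)}}$; the diagonal entry $Q_1^{(n)}+E_0^{(n)}=Q_1^{(n)}$ and $(i=N)$ entry $Q_N^{(n)}+E_{N-1}^{(n)}$ are the natural boundary cases. The same computation with bars gives for $\bar L^{(n)}\bar R^{(n)}-\delta^{(n)}I$ diagonal entries $\bar Q_i^{(n)}+\bar E_{i-1}^{(n)}-\delta^{(n)}$ and subdiagonal entries $-\sqrt{\bar E_i^{(n)}\bar Q_i^{(n)}}$. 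Equality of diagonals is then exactly the first line of \eqref{eqn:compatibility_condition_1}, and equality of the (negative) subdiagonals squared is exactly the second line $\bar Q_i^{(n)}\bar E_i^{(n)}=Q_i^{(n)}E_i^{(n)}$; here the positivity assumption $Q_i^{(n)},E_i^{(n)},\bar Q_i^{(n)},\bar E_i^{(n)}>0$ (which follows from $\tau_i^{(s,n)}>0$ via \eqref{eqn:determinantal_solution_QE}) is what lets me pass from equality of squares of the entries to equality of the entries themselves, so the signs of the square roots match. This establishes the first identity in \eqref{eqn:compatibility_condition_matrix}.

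For the second identity I would compute in the \emph{reversed} order. The product $R^{(n+1)}L^{(n+1)} = (L^{(n+1)})^\top L^{(n+1)}$ has diagonal entries $Q_i^{(n+1)}+E_i^{(n+1)}$ (now with $E_N^{(n+1)}=0$, so the last diagonal entry is $Q_N^{(n+1)}$) and subdiagonal entries $-\sqrt{Q_{i+1}^{(n+1)}E_i^{(n+1)}}$; similarly $\bar R^{(n)}\bar L^{(n)}-\delta^{(n)}I$ has diagonal entries $\bar Q_i^{(n)}+\bar E_i^{(n)}-\delta^{(n)}$ and subdiagonal entries $-\sqrt{\bar Q_{i+1}^{(n)}\bar E_i^{(n)}}$. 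Writing $s=-n$ so that the abbreviations give $Q_i^{(n+1)}=Q_i^{(s-1,n+1)}$ and $E_i^{(n+1)}=E_i^{(s-1,n+1)}$, matching diagonals is precisely the first line of \eqref{eqn:compatibility_condition_2} and matching squared subdiagonals (again using positivity to fix the square-root branch) is the second line $Q_{i+1}^{(s-1,n+1)}E_i^{(s-1,n+1)}=\bar Q_{i+1}^{(s,n)}\bar E_i^{(s,n)}$. The only genuinely delicate points are bookkeeping rather than mathematics: keeping the index-shift convention $Q_i^{(n)}=Q_i^{(-n,n)}$ consistent between \eqref{eqn:compatibility_condition_2} and the matrices, and handling the boundary entries $i=1$ and $i=N$ correctly using $E_0=E_N=\bar E_0=\bar E_N=0$. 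I expect that reconciling these boundary indices — making sure, e.g., that the top-left and bottom-right corners of each product agree with the $i=1,N$ instances of the scalar equations after the substitution $s=-n$ — is the main place where care is needed; the rest is a routine bidiagonal matrix multiplication.
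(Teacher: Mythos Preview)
Your proposal is correct and follows essentially the same approach as the paper's proof: compute the diagonal and off-diagonal entries of the bidiagonal products, match them to the scalar relations \eqref{eqn:compatibility_condition_1} and \eqref{eqn:compatibility_condition_2}, and invoke symmetry for the remaining entries. You are in fact more explicit than the paper about the positivity needed to pass from equality of squared subdiagonals to equality of the subdiagonals themselves, and about the index-shift $Q_i^{(n)}=Q_i^{(-n,n)}$ that connects the second identity to \eqref{eqn:compatibility_condition_2}; the paper handles the second identity with a single ``Similarly.''
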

\begin{proof}	
The $(i,i)$ and $(i,i+1)$ entries of $\bar{L}^{(n)}\bar{R}^{(n)}-\delta^{(n)}I$ are $\bar{Q}_i^{(s,n)}+\bar{E}_{i-1}^{(s,n)}-\delta^{(n)}$ and $-\sqrt{\bar{Q}_i^{(s,n)}\bar{E}_i^{(s,n)}}$, respectively. Using \eqref{eqn:compatibility_condition_1}, we can rewrite the $(i,i)$ and $(i,i+1)$ entries of $\bar{L}^{(n)}\bar{R}^{(n)}-\delta^{(n)}I$ as $Q_i^{(s,n)}+E_{i-1}^{(s,n)}$ and $-\sqrt{Q_i^{(s,n)}E_i^{(s,n)}}$, respectively. 
These are simply the $(i,i)$ and $(i,i+1)$ entries of $L^{(n)}R^{(n)}$. Since $\bar{L}^{(n)}\bar{R}^{(n)}-\delta^{(n)}I$ and $L^{(n)}R^{(n)}$ are both symmetric tridiagonal, the expression in the $(i+1,i)$ entry is equal to that in the $(i,i+1)$ entry.  
We thus obtain the first equation of \eqref{eqn:compatibility_condition_matrix}. Similarly, by focusing on the tridiagonal parts of $R^{(n+1)}L^{(n+1)}$ and $R^{(n)}L^{(n)}-\delta^{(n)}I$, and using \eqref{eqn:compatibility_condition_2}, we obtain the second equation of \eqref{eqn:compatibility_condition_matrix}. 
\end{proof}

We remark that \eqref{eqn:compatibility_condition_matrix} can be regarded as the implicit-shift $LR$ transformation from $L^{(n)}R^{(n)}$ to $R^{(n+1)}L^{(n+1)}$. 
Since ${\rm det}\,L^{(n)}\ne0$ and ${\rm det}\,R^{(n)}\ne0$, the inverses $(L^{(n)})^{-1}$ and $(R^{(n)})^{-1}$ exist. We introduce $N$-by-$N$ lower triangular ${\cal L}^{(n)}\coloneqq(L^{(n)})^{-1}\bar{L}^{(n)}$ and upper triangular ${\cal R}^{(n)}\coloneqq\bar{R}^{(n)}(R^{(n)})^{-1}$. 
Observing Lemma \ref{lem:compatibility_condition_matrix} in terms of tridiagonal $A^{(n)}\coloneqq R^{(n)}L^{(n)}$, we can find the similarity transformation from $A^{(n)}$ to $A^{(n+1)}$. 
\begin{proposition}\label{prop:Lax_representation_dpeakon_Toda}
The discrete-time evolution from $n$ to $n+1$ of $A^{(n)}$ can be expressed as:
\begin{equation}\label{eqn:Lax_representation_dpeakon_Toda}
A^{(n+1)}=({\cal L}^{(n)})^{-1}A^{(n)}{\cal L}^{(n)}={\cal R}^{(n)}A^{(n)}({\cal R}^{(n)})^{-1}.
\end{equation}
\end{proposition}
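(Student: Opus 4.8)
The plan is to derive \eqref{eqn:Lax_representation_dpeakon_Toda} purely by algebraic manipulation of the two identities in Lemma \ref{lem:compatibility_condition_matrix}, together with the definitions $A^{(n)}\coloneqq R^{(n)}L^{(n)}$, ${\cal L}^{(n)}\coloneqq (L^{(n)})^{-1}\bar{L}^{(n)}$, and ${\cal R}^{(n)}\coloneqq \bar{R}^{(n)}(R^{(n)})^{-1}$. First I would establish a ``factorization'' of the matrix $M^{(n)}\coloneqq \bar{L}^{(n)}\bar{R}^{(n)}-\delta^{(n)}I = L^{(n)}R^{(n)}$: the first equation of \eqref{eqn:compatibility_condition_matrix} says $M^{(n)}$ admits both the $\bar{L}\bar{R}$-shifted factorization and the $LR$ factorization, while the second equation says that after swapping the factors of the $\bar{L}\bar{R}$ part and subtracting $\delta^{(n)}I$ we get $R^{(n+1)}L^{(n+1)} = A^{(n+1)}$.

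The key computational step is the chain
\begin{align*}
A^{(n+1)} = R^{(n+1)}L^{(n+1)} = \bar{R}^{(n)}\bar{L}^{(n)}-\delta^{(n)}I
&= \bar{R}^{(n)}\bigl(\bar{L}^{(n)}\bar{R}^{(n)}-\delta^{(n)}I\bigr)(\bar{R}^{(n)})^{-1}\\
&= \bar{R}^{(n)}\,L^{(n)}R^{(n)}\,(\bar{R}^{(n)})^{-1},
\end{align*}
where the third equality just conjugates $\delta^{(n)}I$ trivially and uses associativity, and the fourth uses the first line of \eqref{eqn:compatibility_condition_matrix}. Now I would insert $A^{(n)} = R^{(n)}L^{(n)}$ in two ways. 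Writing $\bar{R}^{(n)}L^{(n)}R^{(n)}(\bar{R}^{(n)})^{-1} = \bigl(\bar{R}^{(n)}(R^{(n)})^{-1}\bigr)\bigl(R^{(n)}L^{(n)}\bigr)\bigl(R^{(n)}(\bar{R}^{(n)})^{-1}\bigr) = {\cal R}^{(n)}A^{(n)}({\cal R}^{(n)})^{-1}$ gives the second stated equality. For the first, I would instead run the analogous chain starting from $L^{(n)}R^{(n)} = \bar{L}^{(n)}\bar{R}^{(n)}-\delta^{(n)}I$ written as $(\bar{L}^{(n)})^{-1}\bigl(L^{(n)}R^{(n)}\bigr)\bar{L}^{(n)} = \bar{R}^{(n)}\bar{L}^{(n)}-\delta^{(n)}I = A^{(n+1)}$, then regroup $(\bar{L}^{(n)})^{-1}L^{(n)}R^{(n)}\bar{L}^{(n)} = \bigl((L^{(n)})^{-1}\bar{L}^{(n)}\bigr)^{-1}\bigl(R^{(n)}L^{(n)}\bigr)\bigl((L^{(n)})^{-1}\bar{L}^{(n)}\bigr) = ({\cal L}^{(n)})^{-1}A^{(n)}{\cal L}^{(n)}$, using $(\bar{L}^{(n)})^{-1}L^{(n)} = ({\cal L}^{(n)})^{-1}$ and $R^{(n)}\bar{L}^{(n)} = (R^{(n)}L^{(n)})\bigl((L^{(n)})^{-1}\bar{L}^{(n)}\bigr)$.

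Throughout I would note that all the relevant inverses exist: $\det L^{(n)}, \det R^{(n)}, \det \bar{L}^{(n)}, \det \bar{R}^{(n)}$ are all nonzero because the diagonal entries $\sqrt{Q_i^{(n)}}, \sqrt{\bar{Q}_i^{(n)}}$ are positive under the standing assumption $\tau_i^{(s,n)}>0$, so ${\cal L}^{(n)}$ and ${\cal R}^{(n)}$ are genuine invertible triangular matrices. I would also remark that the two expressions for $A^{(n+1)}$ are automatically consistent, since ${\cal R}^{(n)}({\cal L}^{(n)})^{-1} = \bar{R}^{(n)}(R^{(n)})^{-1}(\bar{L}^{(n)})^{-1}L^{(n)} = \bar{R}^{(n)}\bigl(L^{(n)}R^{(n)}\bigr)^{-1}L^{(n)}\cdot L^{(n)}\,(\text{rearranged})$ commutes with $A^{(n)}$; but this consistency is not logically needed, as each equality is derived independently. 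I do not anticipate a genuine obstacle here — the statement is a formal consequence of Lemma \ref{lem:compatibility_condition_matrix} — the only mild care required is bookkeeping the triangular-factor regroupings so that the conjugating matrices come out exactly as ${\cal L}^{(n)}$ and ${\cal R}^{(n)}$ rather than some other product, and confirming that the shift terms $\delta^{(n)}I$ pass through the conjugations because scalar multiples of the identity are central.
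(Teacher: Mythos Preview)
Your proposal is correct and follows essentially the same approach as the paper: both start from the second equation of Lemma~\ref{lem:compatibility_condition_matrix}, conjugate by $\bar{R}^{(n)}$ (respectively $(\bar{L}^{(n)})^{-1}$) to pass from $\bar{R}^{(n)}\bar{L}^{(n)}-\delta^{(n)}I$ to $\bar{L}^{(n)}\bar{R}^{(n)}-\delta^{(n)}I$, substitute $L^{(n)}R^{(n)}$ via the first equation, and regroup the triangular factors to recognize ${\cal R}^{(n)}A^{(n)}({\cal R}^{(n)})^{-1}$ (respectively $({\cal L}^{(n)})^{-1}A^{(n)}{\cal L}^{(n)}$). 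The paper simply abbreviates the ${\cal L}^{(n)}$ case by saying ``replace $\bar{R}^{(n)}$ with $(\bar{L}^{(n)})^{-1}$,'' whereas you spell it out; otherwise the arguments are identical.
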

\begin{proof}
Similar to $L^{(n)}$ and $R^{(n)}$, the inverse $(\bar{R}^{(n)})^{-1}$ exists.
The second equation of \eqref{eqn:compatibility_condition_matrix} immediately leads to:
\[R^{(n+1)}L^{(n+1)}=\bar{R}^{(n)}(\bar{L}^{(n)}\bar{R}^{(n)}-\delta^{(n)}I)(\bar{R}^{(n)})^{-1}.\]
Combining this with the first equation of \eqref{eqn:compatibility_condition_matrix}, we derive:
\[R^{(n+1)}L^{(n+1)}=\bar{R}^{(n)}(L^{(n)}R^{(n)})(\bar{R}^{(n)})^{-1}.\]
Noting that $R^{(n+1)}L^{(n+1)}=A^{(n+1)}$ and $\bar{R}^{(n)}(L^{(n)}R^{(n)})(\bar{R}^{(n)})^{-1}=(\bar{R}^{(n)}(R^{(n)})^{-1})(R^{(n)}L^{(n)})$ $(R^{(n)}(\bar{R}^{(n)})^{-1})={\cal R}^{(n)}A^{(n)}({\cal R}^{(n)})^{-1}$, we can derive $A^{(n+1)}={\cal R}^{(n)}A^{(n)}({\cal R}^{(n)})^{-1}$. 
Moreover, by replacing $\bar{R}^{(n)}$ and $(\bar{R}^{(n)})^{-1}$ with $(\bar{L}^{(n)})^{-1}$ and $\bar{L}^{(n)}$, respectively, in the above discussion, we have $A^{(n+1)}=({\cal L}^{(n)})^{-1}A^{(n)}{\cal L}^{(n)}$. 
\end{proof}

Using ${\cal L}^{(n)}$ and ${\cal R}^{(n)}$, we can derive the $LR$-like transformation with an implicit shift, which is simplified to simply be the $LR$ transformation if $\delta^{(n+1)}=\delta^{(n)}$. 
\begin{proposition}\label{prop:Lax_representation_dpeakon}
For ${\cal L}^{(n)}$ and ${\cal R}^{(n)}$, it holds that:
\begin{equation}\label{eqn:Lax_representation_dpeakon}
\frac{1}{\delta^{(n+1)}} \left( {\cal L}^{(n+1)} {\cal R}^{(n+1)}-I \right)=\frac{1}{\delta^{(n)}} \left( {\cal R}^{(n)}{\cal L}^{(n)}-I \right),
\end{equation}
\end{proposition}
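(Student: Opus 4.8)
The plan is to show that each side of \eqref{eqn:Lax_representation_dpeakon} equals $(A^{(n+1)})^{-1}$, so that the two sides are seen to be two different expressions for one and the same inverse matrix. Concretely, I would first establish the two auxiliary identities
\[
{\cal L}^{(n)}{\cal R}^{(n)}=I+\delta^{(n)}(A^{(n)})^{-1},\qquad
{\cal R}^{(n)}{\cal L}^{(n)}=I+\delta^{(n)}(A^{(n+1)})^{-1},
\]
which hold with no restriction on $\delta^{(n)}$, and then---using that the statement implicitly assumes $\delta^{(n)}\neq0$ and $\delta^{(n+1)}\neq0$---divide the second identity by $\delta^{(n)}$ and the first identity, taken at step $n+1$, by $\delta^{(n+1)}$, and equate. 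Throughout I would use that $L^{(n)},R^{(n)},\bar{L}^{(n)},\bar{R}^{(n)}$ are invertible, since their diagonal entries are the positive quantities $\sqrt{Q_i^{(n)}},\sqrt{\bar{Q}_i^{(n)}}$ coming from the assumption $\tau_i^{(s,n)}>0$; consequently $A^{(n)}=R^{(n)}L^{(n)}$ and $A^{(n+1)}=R^{(n+1)}L^{(n+1)}$ are invertible as well.

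To prove the first identity I would insert the definitions ${\cal L}^{(n)}\coloneqq(L^{(n)})^{-1}\bar{L}^{(n)}$ and ${\cal R}^{(n)}\coloneqq\bar{R}^{(n)}(R^{(n)})^{-1}$, obtaining ${\cal L}^{(n)}{\cal R}^{(n)}=(L^{(n)})^{-1}\big(\bar{L}^{(n)}\bar{R}^{(n)}\big)(R^{(n)})^{-1}$, replace $\bar{L}^{(n)}\bar{R}^{(n)}=L^{(n)}R^{(n)}+\delta^{(n)}I$ using the first line of \eqref{eqn:compatibility_condition_matrix}, and expand: the $L^{(n)}R^{(n)}$ term collapses to $I$ and the remaining term is $\delta^{(n)}(L^{(n)})^{-1}(R^{(n)})^{-1}=\delta^{(n)}(R^{(n)}L^{(n)})^{-1}=\delta^{(n)}(A^{(n)})^{-1}$. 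For the second identity I would write ${\cal R}^{(n)}{\cal L}^{(n)}=\bar{R}^{(n)}\big(L^{(n)}R^{(n)}\big)^{-1}\bar{L}^{(n)}$, substitute $L^{(n)}R^{(n)}=\bar{L}^{(n)}\bar{R}^{(n)}-\delta^{(n)}I$, and then carry $\bar{R}^{(n)}$ through the inverse by means of the conjugation rule
\[
\bar{R}^{(n)}\big(\bar{L}^{(n)}\bar{R}^{(n)}-\delta^{(n)}I\big)^{-1}=\big(\bar{R}^{(n)}\bar{L}^{(n)}-\delta^{(n)}I\big)^{-1}\bar{R}^{(n)},
\]
which follows from the trivial identity $\big(\bar{R}^{(n)}\bar{L}^{(n)}-\delta^{(n)}I\big)\bar{R}^{(n)}=\bar{R}^{(n)}\big(\bar{L}^{(n)}\bar{R}^{(n)}-\delta^{(n)}I\big)$ upon left- and right-multiplying by the inverses of the two bracketed factors. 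This turns ${\cal R}^{(n)}{\cal L}^{(n)}$ into $\big(\bar{R}^{(n)}\bar{L}^{(n)}-\delta^{(n)}I\big)^{-1}\bar{R}^{(n)}\bar{L}^{(n)}$, and the second line of \eqref{eqn:compatibility_condition_matrix} identifies $\bar{R}^{(n)}\bar{L}^{(n)}-\delta^{(n)}I$ with $R^{(n+1)}L^{(n+1)}=A^{(n+1)}$, so ${\cal R}^{(n)}{\cal L}^{(n)}=(A^{(n+1)})^{-1}\big(A^{(n+1)}+\delta^{(n)}I\big)=I+\delta^{(n)}(A^{(n+1)})^{-1}$. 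An alternative route to the second identity, avoiding the re-computation, is to conjugate the first identity by ${\cal R}^{(n)}$ and use $A^{(n+1)}={\cal R}^{(n)}A^{(n)}({\cal R}^{(n)})^{-1}$ from Proposition \ref{prop:Lax_representation_dpeakon_Toda}.

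The only delicate step is the non-commutative manipulation in the second identity: because $\bar{L}^{(n)}$ and $\bar{R}^{(n)}$ do not commute, one cannot cancel them as in the scalar case, and it is essential to shuttle $\bar{R}^{(n)}$ across the inverse via the $\bar{L}\bar{R}\leftrightarrow\bar{R}\bar{L}$ similarity; every other step is routine bookkeeping with the already-proved relations \eqref{eqn:compatibility_condition_matrix}. Assembling the two identities then gives
\[
\frac{1}{\delta^{(n+1)}}\big({\cal L}^{(n+1)}{\cal R}^{(n+1)}-I\big)=(A^{(n+1)})^{-1}=\frac{1}{\delta^{(n)}}\big({\cal R}^{(n)}{\cal L}^{(n)}-I\big),
\]
which is precisely \eqref{eqn:Lax_representation_dpeakon}.
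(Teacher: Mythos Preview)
Your proof is correct and follows essentially the same approach as the paper: both arguments establish the key identity ${\cal L}^{(n)}{\cal R}^{(n)}=I+\delta^{(n)}(A^{(n)})^{-1}$ and then identify each side of \eqref{eqn:Lax_representation_dpeakon} with $(A^{(n+1)})^{-1}$. The only cosmetic difference is that for the second identity the paper invokes the similarity $A^{(n+1)}={\cal R}^{(n)}A^{(n)}({\cal R}^{(n)})^{-1}$ from Proposition~\ref{prop:Lax_representation_dpeakon_Toda} (which you list as your ``alternative route''), whereas your primary derivation computes ${\cal R}^{(n)}{\cal L}^{(n)}$ directly from the second line of \eqref{eqn:compatibility_condition_matrix}; the two routes are equivalent and equally short.
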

\begin{proof}
We prepare an $N$-by-$N$ matrix:
\[
{\cal A}^{(n)}\coloneqq \frac{1}{\delta^{(n)}} \left( {\cal L}^{(n)} {\cal R}^{(n)}-I \right).
\]
Recalling that ${\cal L}^{(n)}=(L^{(n)})^{-1}\bar{L}^{(n)}$ and ${\cal R}^{(n)}=\bar{R}^{(n)}(R^{(n)})^{-1}$, and using the first equation of \eqref{eqn:compatibility_condition_matrix} with $R^{(n)}L^{(n)}=A^{(n)}$, we see that:
\[
{\cal A}^{(n)}=(A^{(n)})^{-1}.
\]
Combining this with Proposition \ref{prop:Lax_representation_dpeakon_Toda}, we obtain:
\[
{\cal A}^{(n+1)}=({\cal L}^{(n)})^{-1}{\cal A}^{(n)}{\cal L}^{(n)}={\cal R}^{(n)}{\cal A}^{(n)}({\cal R}^{(n)})^{-1}.
\]
This similarity transformation immediately leads to \eqref{eqn:Lax_representation_dpeakon}. 
\end{proof}

Now, we introduce new variables $p_i^{(n)}$ and $x_i^{(n)}$ which satisfy:
\begin{equation}\label{eqn:QE_and_dpeakon_transformation}
\left\{\begin{aligned}
&1+\delta^{(n)}p_i^{(n)}\coloneqq\frac{\bar{Q}_i^{(n)}}{Q_i^{(n)}},\quad i=1,2,\dots,N,\\
&e^{ x_{i+1}^{(n)}-x_i^{(n)} }\coloneqq\frac{ p_i^{(n)}E_i^{(n)} }{ p_{i+1}^{(n)}Q_{i+1}^{(n)}},\quad i=1,2,\dots,N-1.
\end{aligned}\right.
\end{equation}
These are variables that will later be shown to be time-discretizations 
of the peakon variables $p_i(t)$ and $x_i(t)$. 
From the second equation of \eqref{eqn:compatibility_condition_1} and the first equation of \eqref{eqn:QE_and_dpeakon_transformation}, it follows that:
\begin{equation}\label{eqn:QE_and_dpeakon_transformation_sub}
\frac{\bar{E}_i^{(n)}}{E_i^{(n)}}=\frac{1}{1+\delta^{(n)}p_i^{(n)}},\quad i=1,2,\ldots,N-1.
\end{equation}
Using \eqref{eqn:QE_and_dpeakon_transformation} and \eqref{eqn:QE_and_dpeakon_transformation_sub}, we can rewrite the entries of ${\cal R}^{(n)}=\bar{R}^{(n)}(R^{(n)})^{-1}$ in terms of $p_i^{(n)}$, $x_i^{(n)}$ and $\delta^{(n)}$ as follows:
\begin{equation}\label{eqn:dpeakon_matrix}
{\cal R}^{(n)}=\left( \begin{array}{cccc}
\sqrt{1+\delta^{(n)}p_1^{(n)}} 
& \displaystyle\frac{\delta^{(n)}p_1^{(n)}}{\sqrt{1+\delta^{(n)}p_1^{(n)}} }\sqrt{ \frac{ p_2^{(n)}e^{x_2^{(n)}} }{ p_1^{(n)}e^{x_1^{(n)}} } } & \cdots 
& \displaystyle\frac{\delta^{(n)}p_1^{(n)}}{\sqrt{1+\delta^{(n)}p_1^{(n)}} }\sqrt{ \frac{ p_N^{(n)}e^{x_N^{(n)}} }{ p_1^{(n)}e^{x_1^{(n)}} } } \\
& \sqrt{1+\delta^{(n)}p_2^{(n)}} & \cdots & 
\displaystyle\frac{\delta^{(n)}p_2^{(n)}}{\sqrt{1+\delta^{(n)}p_2^{(n)}} }
\sqrt{ \displaystyle\frac{ p_N^{(n)}e^{x_N^{(n)}} }{ p_2^{(n)}e^{x_2^{(n)}} } } \\
& & \ddots & \vdots \\& & & \sqrt{1+\delta^{(n)}p_{N}^{(n)}}
\end{array} \right).
\end{equation}
Of course, ${\cal L}^{(n)}$ can be rewritten as the transpose of this matrix. 
Proposition \ref{prop:Lax_representation_dpeakon} thus enables the design of a recursion formula that generates the discrete-time evolution from $n$ to $n+1$ of $p_i^{(n)}$ and $x_i^{(n)}$. 
\begin{theorem}\label{thm:dpeakon}
The discrete-time evolution from $n$ to $n+1$ of $p_i^{(n)}$ and $x_i^{(n)}$ can be expressed involving $\delta^{(n)}$ as follows:
\begin{equation}\label{eqn:dpeakon}
\left\{\begin{aligned}
&p_i^{(n+1)}+\sum_{j=1}^{i-1}\frac{\delta^{(n+1)}p_j^{(n+1)}p_i^{(n+1)}e^{x_i^{(n+1)}-x_j^{(n+1)}} }{ 1+\delta^{(n+1)}p_j^{(n+1)}}
=p_i^{(n)}+\sum_{j=i+1}^N\frac{ \delta^{(n)}p_i^{(n)}p_j^{(n)}e^{x_j^{(n)}-x_i^{(n)}} }{1+\delta^{(n)}p_i^{(n)}},\\
&\quad i=1,2,\ldots,N, \\
&e^{x_i^{(n+1)}-x_i^{(n)}}=\frac{p_i^{(n)}}{p_i^{(n+1)}}
\frac{\displaystyle \left( 1+\delta^{(n)} \sum_{j=i}^{N}p_j^{(n)}e^{x_j^{(n)}-x_i^{(n)}} \right)^2 }{ 1+\delta^{(n)}p_i^{(n)} },\quad i=1,2,\ldots,N,
\end{aligned}\right.
\end{equation}
\end{theorem}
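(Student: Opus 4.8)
The plan is to convert the factorizations behind Proposition~\ref{prop:Lax_representation_dpeakon} into two explicit closed forms for the entries of ${\cal A}^{(n+1)}$ and then match them. From the proof of Proposition~\ref{prop:Lax_representation_dpeakon} I take ${\cal A}^{(n)}=\frac{1}{\delta^{(n)}}({\cal L}^{(n)}{\cal R}^{(n)}-I)=(A^{(n)})^{-1}$ and ${\cal A}^{(n+1)}={\cal R}^{(n)}{\cal A}^{(n)}({\cal R}^{(n)})^{-1}$; I also use ${\cal L}^{(n)}=({\cal R}^{(n)})^{\top}$, which is immediate from the definitions of ${\cal L}^{(n)},{\cal R}^{(n)}$ together with $R^{(n)}=(L^{(n)})^{\top}$ and $\bar R^{(n)}=(\bar L^{(n)})^{\top}$. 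Hence $\delta^{(n)}{\cal A}^{(n)}+I=({\cal R}^{(n)})^{\top}{\cal R}^{(n)}$, and conjugating this by ${\cal R}^{(n)}$ while using ${\cal A}^{(n+1)}={\cal R}^{(n)}{\cal A}^{(n)}({\cal R}^{(n)})^{-1}$ gives $\delta^{(n)}{\cal A}^{(n+1)}+I={\cal R}^{(n)}{\cal L}^{(n)}={\cal R}^{(n)}({\cal R}^{(n)})^{\top}$; likewise, at the next step, $\delta^{(n+1)}{\cal A}^{(n+1)}+I=({\cal R}^{(n+1)})^{\top}{\cal R}^{(n+1)}$. Thus ${\cal A}^{(n+1)}$ admits two Gram-type factorizations, one carrying step-$n$ data and one carrying step-$(n+1)$ data.

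Next I would compute these products entrywise using the explicit form \eqref{eqn:dpeakon_matrix} of ${\cal R}^{(n)}$. Since ${\cal R}^{(n)}$ is upper triangular, for $i\le j$ the entry $\bigl(({\cal R}^{(n)})^{\top}{\cal R}^{(n)}\bigr)_{i,j}$ is a sum over $k\le i$ and $\bigl({\cal R}^{(n)}({\cal R}^{(n)})^{\top}\bigr)_{i,j}$ a sum over $k\ge j$. Factoring out, respectively, $\delta^{(n)}\sqrt{p_i^{(n)}p_j^{(n)}}\,e^{(x_j^{(n)}-x_i^{(n)})/2}$ and $\sqrt{p_i^{(n)}p_j^{(n)}}\,e^{(x_j^{(n)}-x_i^{(n)})/2}/\sqrt{(1+\delta^{(n)}p_i^{(n)})(1+\delta^{(n)}p_j^{(n)})}$, the remaining sums telescope into $1+\sum_{k<i}\delta^{(n)}p_k^{(n)}(1+\delta^{(n)}p_k^{(n)})^{-1}e^{x_i^{(n)}-x_k^{(n)}}$ and $1+\delta^{(n)}\sum_{k\ge j}p_k^{(n)}e^{x_k^{(n)}-x_j^{(n)}}$; for $i=j$ the term with $k=i$ (the diagonal of ${\cal R}^{(n)}$) absorbs the $-I$ correction and the same formulas result. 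This gives, for all $i\le j$,
\[
{\cal A}^{(n)}_{i,j}=\sqrt{p_i^{(n)}p_j^{(n)}}\;e^{\frac{x_j^{(n)}-x_i^{(n)}}{2}}\Bigl(1+\sum_{k<i}\frac{\delta^{(n)}p_k^{(n)}}{1+\delta^{(n)}p_k^{(n)}}\,e^{x_i^{(n)}-x_k^{(n)}}\Bigr),
\]
\[
{\cal A}^{(n+1)}_{i,j}=\frac{\sqrt{p_i^{(n)}p_j^{(n)}}\;e^{\frac{x_j^{(n)}-x_i^{(n)}}{2}}}{\sqrt{(1+\delta^{(n)}p_i^{(n)})(1+\delta^{(n)}p_j^{(n)})}}\Bigl(1+\delta^{(n)}\sum_{k\ge j}p_k^{(n)}e^{x_k^{(n)}-x_j^{(n)}}\Bigr),
\]
which in passing re-exhibits that ${\cal A}^{(n)}=(A^{(n)})^{-1}$ is symmetric semiseparable.

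Finally I would equate the second formula above with the first one written at step $n+1$ (the $({\cal R}^{(n+1)})^{\top}{\cal R}^{(n+1)}$ factorization, now in terms of $p_i^{(n+1)},x_i^{(n+1)}$). Setting $i=j$ and clearing the factor $1+\delta^{(n)}p_i^{(n)}$ gives exactly the first equation of \eqref{eqn:dpeakon}. For general $i\le j$ the resulting identity has the separated form $F_i^{(n+1)}G_j^{(n+1)}=H_i^{(n)}K_j^{(n)}$, with $G_j^{(n+1)}=\sqrt{p_j^{(n+1)}}\,e^{x_j^{(n+1)}/2}$, $H_i^{(n)}=\sqrt{p_i^{(n)}}\,e^{-x_i^{(n)}/2}/\sqrt{1+\delta^{(n)}p_i^{(n)}}$, and $F_i^{(n+1)},K_j^{(n)}$ the complementary factors; hence $F_i^{(n+1)}/H_i^{(n)}=K_j^{(n)}/G_j^{(n+1)}$ equals a positive constant independent of $i$ and $j$. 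Because \eqref{eqn:QE_and_dpeakon_transformation} determines each $x_i^{(n)}$ only up to an additive constant depending on $n$, and the first equation of \eqref{eqn:dpeakon} is insensitive to such a shift, I would normalize these constants so that this common ratio equals $1$; the relation $K_i^{(n)}=G_i^{(n+1)}$ then reads
\[
\sqrt{p_i^{(n)}}\;e^{x_i^{(n)}/2}\Bigl(1+\delta^{(n)}\sum_{k\ge i}p_k^{(n)}e^{x_k^{(n)}-x_i^{(n)}}\Bigr)=\sqrt{1+\delta^{(n)}p_i^{(n)}}\;\sqrt{p_i^{(n+1)}}\;e^{x_i^{(n+1)}/2},
\]
and squaring it and solving for $e^{x_i^{(n+1)}-x_i^{(n)}}$ gives the second equation of \eqref{eqn:dpeakon}.

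I expect the main obstacle to be the entrywise computation in the middle step: verifying that the double sums produced by $({\cal R}^{(n)})^{\top}{\cal R}^{(n)}$ and ${\cal R}^{(n)}({\cal R}^{(n)})^{\top}$ collapse uniformly in $i\le j$ (including the diagonal) into the single sums $1+\delta^{(n)}\sum_k p_k^{(n)}e^{x_k^{(n)}-x_j^{(n)}}$, and then keeping the two gauge-equivalent single-pair representations of ${\cal A}^{(n+1)}$ — together with the accompanying additive normalization of $\{x_i^{(n)}\}$ — carefully apart. Once the two closed forms are in hand, \eqref{eqn:dpeakon} is obtained by matching entries.
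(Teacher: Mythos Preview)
Your proposal is correct and follows essentially the same route as the paper: both proofs compare the $(i,i)$ and $(i,j)$ entries of the identity $\frac{1}{\delta^{(n+1)}}({\cal L}^{(n+1)}{\cal R}^{(n+1)}-I)=\frac{1}{\delta^{(n)}}({\cal R}^{(n)}{\cal L}^{(n)}-I)$ from Proposition~\ref{prop:Lax_representation_dpeakon} (your two Gram factorizations of ${\cal A}^{(n+1)}$ are exactly its two sides), obtain the diagonal relation directly, reduce the off-diagonal relation to a separated form whose two sides depend only on $i$ and $j$ respectively, and then fix the resulting free constant to $1$. Your intermediate closed formulas for ${\cal A}^{(n)}_{i,j}$ and ${\cal A}^{(n+1)}_{i,j}$ are a mild enrichment (making the semiseparable structure explicit), but the argument is the same as the paper's.
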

\begin{proof}
Observing that the $(i,i)$ entries of $(1/\delta^{(n+1)})( {\cal L}^{(n+1)}{\cal R}^{(n+1)}-I )=(1/\delta^{(n)})( {\cal R}^{(n)}{\cal L}^{(n)}-I )$, we can easily derive the first equation of \eqref{eqn:dpeakon}. Focusing on the $(i,j)$ entries of $(1/\delta^{(n+1)})( {\cal L}^{(n+1)}{\cal R}^{(n+1)}-I )=(1/\delta^{(n)})( {\cal R}^{(n)}{\cal L}^{(n)}-I )$, we obtain:
\begin{eqnarray}
&\nonumber
\left( p_j^{(n+1)}+\sum_{k=1}^{j-1} 
\frac{ \delta^{(n+1)}(p_k^{(n+1)})^2 }{ 1+\delta^{(n+1)}p_k^{(n+1)} }\frac{ p_j^{(n+1)}e^{x_j^{(n+1)}} }{ p_k^{(n+1)}e^{x_k^{(n+1)}}}  \right)\sqrt{ \frac{ p_i^{(n+1)}e^{x_i^{(n+1)}} }{ p_j^{(n+1)}e^{x_j^{(n+1)}} } } \\
&\nonumber
\quad=\left( p_j^{(n)}+\sum_{k=i+1}^N\frac{ \delta^{(n)}p_j^{(n)}p_i^{(n)} }{1+\delta^{(n)}p_i^{(n)}} 
\frac{ p_k^{(n)}e^{x_k^{(n)}} }{ p_i^{(n)}e^{x_i^{(n)}} } \right)
\sqrt{ \frac{ 1+\delta^{(n)}p_i^{(n)} }{ 1+\delta^{(n)}p_j^{(n)} }
\frac{ p_i^{(n)}e^{x_i^{(n)}} }{ p_j^{(n)}e^{x_j^{(n)}} } },\\
&\qquad i=1,2,\ldots,N,\quad j=1,2,\ldots,i-1.
\label{eqn:dpeakon_proof1}
\end{eqnarray}
From the first equation of \eqref{eqn:dpeakon} with \eqref{eqn:dpeakon_proof1}, it follows that:
\begin{eqnarray*}
&\frac{ \displaystyle p_j^{(n)} \left( 1+\delta^{(n)} \sum_{k=j}^{N}p_k^{(n)}e^{x_k^{(n)}-x_j^{(n)}} \right) }{1+\delta^{(n)}p_j^{(n)}} \sqrt{ \frac{ p_i^{(n+1)}e^{x_i^{(n+1)}} }{ p_j^{(n+1)}e^{x_j^{(n+1)}} } } \\
&\quad=\frac{ \displaystyle p_j^{(n)} \left( 1+\delta^{(n)} \sum_{k=i}^{N}p_k^{(n)}e^{x_k^{(n)}-x_i^{(n)}} \right) }{1+\delta^{(n)}p_i^{(n)}}\sqrt{ \frac{ 1+\delta^{(n)}p_i^{(n)} }{ 1+\delta^{(n)}p_j^{(n)} } \frac{ p_i^{(n)}e^{x_i^{(n)}} }{ p_j^{(n)}e^{x_j^{(n)}} } }.
\end{eqnarray*}
We can rewrite this as:
\begin{equation}\label{eqn:dpeakon_proof2}
\begin{aligned}
&\frac{ p_i^{(n+1)}e^{x_i^{(n+1)}} }{ p_i^{(n)}e^{x_i^{(n)}} }
\frac{ 1+\delta^{(n)}p_i^{(n)} }{ \displaystyle \left(1+\delta^{(n)} \sum_{k=i}^Np_k^{(n)}e^{x_k^{(n)}-x_i^{(n)}} \right)^2}
=\frac{ p_j^{(n+1)}e^{x_j^{(n+1)}} }{ p_j^{(n)}e^{x_j^{(n)}} }
\frac{ 1+\delta^{(n)}p_j^{(n)} }{ \displaystyle \left(1+\delta^{(n)} \sum_{k=j}^Np_k^{(n)}e^{x_k^{(n)}-x_j^{(n)}} \right)^2}.
\end{aligned}
\end{equation}
We emphasize that the variable subscripts on the lefthand side of \eqref{eqn:dpeakon_proof2} are all $i$ and those on the righthand side of \eqref{eqn:dpeakon_proof2} are all $j$. 
This implies that the both sides of \eqref{eqn:dpeakon_proof2} do not depend on the variable subscripts, namely, for some constant $C\neq0$, it holds that:
\begin{equation}\label{eqn:dpeakon_proof3}
\frac{ p_i^{(n+1)}e^{x_i^{(n+1)}} }{ p_i^{(n)}e^{x_i^{(n)}} }
\frac{ 1+\delta^{(n)}p_i^{(n)} }{ \displaystyle \left(1+\delta^{(n)} \sum_{k=i}^Np_k^{(n)}e^{x_k^{(n)}-x_i^{(n)}} \right)^2}=C,\quad i=1,2,\ldots,N.
\end{equation}
Thus, by letting $C=1$ in \eqref{eqn:dpeakon_proof3}, we obtain the second equation of \eqref{eqn:dpeakon}. 
\end{proof}

The following aims to show that \eqref{eqn:dpeakon} can be regarded as a time-discretization of the peakon equation \eqref{eqn:peakon}. 
The first equation of \eqref{eqn:dpeakon} leads to:
\begin{equation}\label{eqn:dpeakon_limit_poof_1}
\frac{p_i^{(n+1)}-p_i^{(n)}}{\delta^{(n)}}=-\frac{\delta^{(n+1)}}{\delta^{(n)}}
\sum_{j=1}^{i-1}\frac{ p_j^{(n+1)} p_i^{(n+1)}e^{x_i^{(n+1)}-x_j^{(n+1)}} }{ 1+\delta^{(n+1)}p_j^{(n+1)}}+\sum_{j=i+1}^N\frac{ p_j^{(n)}p_i^{(n)}e^{x_j^{(n)}-x_i^{(n)}} }{1+\delta^{(n)}p_i^{(n)}}.
\end{equation}
We here define continuous-time $t$ as $t=\delta^{(0)}+\delta^{(1)}+\cdots+\delta^{(n-1)}$. Then, in \eqref{eqn:dpeakon_limit_poof_1}, the discrete-time $n$ is related to the continuous-time $t$ such that $p_i^{(n)}=p_i(t),x_i^{(n)}=x_i(t)$ and $p_i^{(n+1)}=p_i(t+\delta^{(n)})$, and $x_i^{(n+1)}=x_i(t+\delta^{(n)})$. 
Thus, by taking the limit as $\delta^{(n)}\to0$ in \eqref{eqn:dpeakon_limit_poof_1} where $\delta^{(n+1)}/\delta^{(n)}\to1$, we derive the first equation of \eqref{eqn:peakon}. 

Taking the logarithm of the both sides of the second equation of \eqref{eqn:dpeakon} and then dividing them by $\delta^{(n)}$, we obtain:
\[
\begin{aligned}
\frac{x_i^{(n+1)}-x_i^{(n)}}{\delta^{(n)}}
&=-\frac{\log p_i^{(n+1)}-\log p_i^{(n)} }{ \delta^{(n)} }
-\frac{ \log\left(1+\delta^{(n)}p_i^{(n)}\right) }{ \delta^{(n)} }
\\
&\quad+2\frac{ \displaystyle \log\left(1+\delta^{(n)}\sum_{j=i+1}^{N} p_j^{(n)}e^{x_j^{(n)}-x_i^{(n)}} \right) }{ \delta^{(n)} }.
\end{aligned}
\]
Replacing $p_i^{(n)},x_i^{(n)}$ and $p_i^{(n+1)},x_i^{(n+1)}$ with $p_i(t), x_i(t)$ and $p_i(t+\delta^{(n)}),x_i(t+\delta^{(n)})$, respectively, and taking the limit as $\delta^{(n)}\to0$, we derive:
\[\frac{d x_i(t) }{ dt }=-\frac{ d \log p_i(t) }{ dt }-p_i(t)+2\sum_{j=i+1}^{N} p_j(t)e^{ x_j(t)-x_i(t) },\]
which is equivalent to the second equation of \eqref{eqn:peakon}. 
Therefore, we have the following important theorem concerning time-discretization of the peakon equation \eqref{eqn:peakon}.

\begin{theorem}\label{thm:peakon_dpeakon}  Equation \eqref{eqn:dpeakon} is a time-discretization involving the discretization parameter $\delta^{(n)}$ of the peakon equation \eqref{eqn:peakon}.
\end{theorem}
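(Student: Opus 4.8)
The plan is to read ``time-discretization'' in the precise sense already implicit in the calculation preceding the statement: fix the correspondence between the discrete step $n$ and a continuous time by setting $t\coloneqq\delta^{(0)}+\delta^{(1)}+\cdots+\delta^{(n-1)}$, identify $p_i^{(n)}$ and $x_i^{(n)}$ with sampled values $p_i(t)$ and $x_i(t)$ of smooth functions (so that $p_i^{(n+1)}=p_i(t+\delta^{(n)})$ and $x_i^{(n+1)}=x_i(t+\delta^{(n)})$), and then show that, in the limit $\delta^{(n)}\to0$ with $\delta^{(n+1)}/\delta^{(n)}\to1$, the system \eqref{eqn:dpeakon} degenerates to \eqref{eqn:peakon}. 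Almost all of the work is contained in the two displayed computations just above the statement, so the proof is really an assembly of those two limits, carried out in the following order.

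First I would handle the amplitude equation. Rearranging the first equation of \eqref{eqn:dpeakon} into the forward-difference identity \eqref{eqn:dpeakon_limit_poof_1} and letting $\delta^{(n)}\to0$: the left-hand side tends to $dp_i(t)/dt$; in the first sum the prefactor $\delta^{(n+1)}/\delta^{(n)}$ tends to $1$ and each denominator $1+\delta^{(n+1)}p_j^{(n+1)}$ tends to $1$, so that sum tends to $p_i(t)\sum_{j=1}^{i-1}p_j(t)e^{x_i(t)-x_j(t)}$ with a minus sign; the second sum tends to $p_i(t)\sum_{j=i+1}^{N}p_j(t)e^{x_j(t)-x_i(t)}$. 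This reproduces the first equation of \eqref{eqn:peakon} exactly.

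Next I would handle the position equation. Taking the logarithm of the second equation of \eqref{eqn:dpeakon}, dividing by $\delta^{(n)}$, and using $\delta^{-1}\log(1+\delta a)\to a$ as $\delta\to0$, one obtains in the limit
\[
\frac{dx_i(t)}{dt}=-\frac{d\log p_i(t)}{dt}-p_i(t)+2\sum_{j=i+1}^{N}p_j(t)e^{x_j(t)-x_i(t)},
\]
after splitting the $j=i$ term $p_i(t)$ off the sum $\sum_{j=i}^{N}$. This is not yet in the form of the second equation of \eqref{eqn:peakon}; here is the one place that needs care. I would then eliminate $d\log p_i(t)/dt=p_i(t)^{-1}\,dp_i(t)/dt$ by feeding in the amplitude equation just established, i.e.\ $d\log p_i(t)/dt=-\sum_{j=1}^{i-1}p_j(t)e^{x_i(t)-x_j(t)}+\sum_{j=i+1}^{N}p_j(t)e^{x_j(t)-x_i(t)}$; after substitution the $\sum_{j>i}$ terms combine and the result is precisely the second equation of \eqref{eqn:peakon}. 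Collecting the two limits then yields the theorem.

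I do not expect a genuine analytic obstacle — the limits are elementary. The points that require attention are (i) pinning down the right notion of limit, namely the joint conditions $\delta^{(n)}\to0$ and $\delta^{(n+1)}/\delta^{(n)}\to1$ together with the identification of the discrete sequences with samples of smooth solution curves, and (ii) recognizing that the continuous limit of the $x$-equation is only \emph{equivalent} to, not literally identical with, the second equation of \eqref{eqn:peakon}, so that the amplitude equation has to be substituted back in; the sign bookkeeping in that final substitution is the one spot where an error is easy to make.
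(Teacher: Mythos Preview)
Your approach is exactly the paper's: the argument for Theorem~\ref{thm:peakon_dpeakon} \emph{is} the pair of limit computations displayed just before the statement, and you have organized them correctly, including the extra step (which the paper leaves implicit) of substituting the amplitude equation into the limiting $x$-equation to recover the second line of \eqref{eqn:peakon}.

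There is, however, one sign slip in the very place you flagged as dangerous. In the limit of the logarithmic identity the sum in the numerator of the second equation of \eqref{eqn:dpeakon} runs from $j=i$, so
\[
2\sum_{j=i}^{N}p_j(t)e^{x_j(t)-x_i(t)}\;-\;p_i(t)\;=\;+\,p_i(t)\;+\;2\sum_{j=i+1}^{N}p_j(t)e^{x_j(t)-x_i(t)},
\]
and the displayed limit should read
\[
\frac{dx_i(t)}{dt}=-\frac{d\log p_i(t)}{dt}\;+\;p_i(t)\;+\;2\sum_{j=i+1}^{N}p_j(t)e^{x_j(t)-x_i(t)},
\]
with $+p_i(t)$ rather than $-p_i(t)$. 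With this correction your substitution of $d\log p_i/dt=-\sum_{j<i}p_je^{x_i-x_j}+\sum_{j>i}p_je^{x_j-x_i}$ yields precisely $\sum_{j<i}p_je^{x_i-x_j}+p_i+\sum_{j>i}p_je^{x_j-x_i}$, i.e.\ the second equation of \eqref{eqn:peakon}. (The paper's own intermediate display carries the same slip; the equivalence claim is nonetheless correct once the sign is fixed.)
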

Combining the first and second equations of \eqref{eqn:dpeakon}, we obtain:
\begin{equation*}
\begin{aligned}
p_i^{(n+1)}&=p_i^{(n)}+\sum_{j=i+1}^{N}\frac{ \delta^{(n)} p_j^{(n)}p_i^{(n)}e^{x_j^{(n)}-x_i^{(n)}} }{1+\delta^{(n)}p_i^{(n)}}\\
&\quad -\frac{\displaystyle \left(1+\delta^{(n)}\sum_{j=i+1}^N p_j^{(n)}e^{x_j^{(n)}-x_i^{(n)}} \right)^2}{1+\delta^{(n)}p_i^{(n)}}\sum_{j=1}^{i-1} \frac{\delta^{(n+1)} p_j^{(n+1)}p_i^{(n)} e^{x_i^{(n)}-x_j^{(n+1)}} }{ 1+\delta^{(n+1)} p_j^{(n+1)}}.
\end{aligned}
\end{equation*}
Thus, we see that the values of $p_i^{(n+1)}$ are uniquely determined from those of $\{ p_j^{(n)} \}_{j=i}^{N},\{ x_j^{(n)} \}_{j=i}^{N}$ and $\{ p_j^{(n+1)} \}_{j=1}^{i-1},$ $\{ x_j^{(n+1)} \}_{j=1}^{i-1}$ and the discretization parameters $\delta^{(n)}$ and $\delta^{(n+1)}$. 
It is clear that, in the second equation of \eqref{eqn:dpeakon}, the values of $x_i^{(n+1)}$ can be determined the same way. 
Figure~\ref{fig:diagram} illustrates how to generate discrete-time evolutions of the variables $p_i^{(n)}$ and $x_i^{(n)}$. 
\begin{figure}\begin{center}
\includegraphics[width=\textwidth]{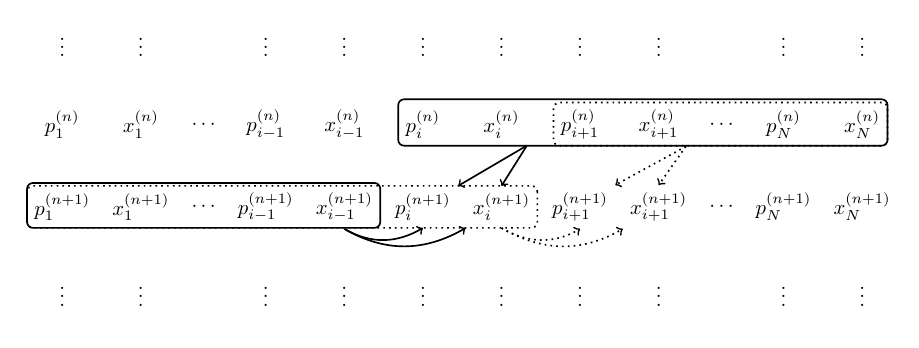}
\caption{Discrete-time evolutions of the variables $p_i^{(n)}$ and $x_i^{(n)}$.}\label{fig:diagram}
\end{center}
\end{figure}

The $LR$-like perspective plays a key role in concluding that \eqref{eqn:dpeakon} is a discrete peakon equation. To make this concept more convincing, we also consider the matrix representation of \eqref{eqn:peakon}. 
From the shifted $LR$-like transformation \eqref{eqn:Lax_representation_dpeakon} with ${\cal A}^{(n)}=(1/\delta^{(n)})( {\cal L}^{(n)} {\cal R}^{(n)}-I )$, it follows that:
\begin{equation}\label{eqn:dpeakon_matrix}
\begin{aligned}
&\frac{1}{\delta^{(n)}}\left({\cal A}^{(n+1)}-{\cal A}^{(n)}\right)\\
&\quad=
\frac{1}{ \delta^{(n)} } 
\left( ({\cal R}^{(n)})_{>} + \Omega({\cal R}^{(n)})-I \right) {\cal A}^{(n)}
- {\cal A}^{(n+1)} \frac{ 1 }{\delta^{(n)}} \left( ({\cal R}^{(n)})_{>}+\Omega({\cal R}^{(n)})-I \right),
\end{aligned}
\end{equation}
where $\Omega({\cal R}^{(n)})$ denotes the diagonal matrix whose diagonal entries are those of ${\cal R}^{(n)}$. 
Since ${\cal A}^{(n)}\to{\cal A}(t),{\cal A}^{(n+1)}\to{\cal A}(t),(1/\delta^{(n)})({\cal R}^{(n)})_{>}\to({\cal A}(t))_{>}$ and $(1/\delta^{(n)})(\Omega({\cal R}^{(n)})-I)\to(1/2)\Omega({\cal A}(t))$ as $\delta^{(n)}\to0$ in \eqref{eqn:dpeakon_matrix}, we obtain:
\[\frac{d{\cal A}(t)}{dt}
=\left( ({\cal A}(t))_{>}+\frac{1}{2}\Omega({\cal A}(t)) \right) {\cal A}(t)
-{\cal A}(t) \left( ({\cal A}(t))_{>}+\frac{1}{2}\Omega({\cal A}(t)) \right).\]
Moreover, by noting that $({\cal A}(t))_{>}+(1/2)\Omega({\cal A}(t))=(1/2){\cal A}(t)-\Pi({\cal A}(t))$, we can derive the matrix representation of the peakon equation \eqref{eqn:peakon},  namely, \eqref{eqn:Lax_representation_peakon}. 

Taking the limit as $\delta^{(n)}\to\infty$ in the shifted $LR$-like transformation \eqref{eqn:Lax_representation_dpeakon}, and setting $U^{(n)}\coloneqq \lim_{\delta^{(n)}\to\infty}R^{(n)}/\sqrt{\delta^{(n)}}$, we obtain:
\begin{equation}\label{eqn:LR_U}
(U^{(n+1)})^\top U^{(n+1)}=U^{(n)}(U^{(n)})^\top,
\end{equation}
where $U^{(n)}$ is an upper triangular matrix defined by:
\[
U^{(n)}\coloneqq\left( \begin{array}{cccc}
\sqrt{ p_1^{(n)} } & \sqrt{ p_2^{(n)} } e^{ \frac{x_2^{(n)}-x_1^{(n)}}{2} } & \cdots 
& \sqrt{ p_N^{(n)} } e^{ \frac{x_N^{(n)}-x_1^{(n)}}{2} }  \\
& \sqrt{ p_2^{(n)} } & \cdots & \sqrt{ p_N^{(n)} } e^{ \frac{x_N^{(n)}-x_2^{(n)}}{2} } \\
& & \ddots & \vdots \\& & & \sqrt{ p_N^{(n)} }
\end{array} \right).
\]
We pepare the variables $\bar{p}_i^{(n)}$ and $\bar{x}_i^{(n)}$ expressed using the discrete peakon variables $p_i^{(n)}$ and $x_i^{(n)}$ as follows:
\begin{equation*}
\left\{\begin{aligned}
&\bar{p}_1^{(n)}\coloneqq p_1^{(n)},\quad
\bar{p}_i^{(n)}\coloneqq \frac{p_i^{(n)}}{1-e^{\bar{x}_i^{(n)}-\bar{x}_{i-1}^{(n)}}},\quad i=2,3,\ldots,N,\\
&e^{\bar{x}_2^{(n)}-\bar{x}_1^{(n)}} \coloneqq \frac{ e^{x_2^{(n)}-x_1^{(n)}} }{ 1+e^{x_2^{(n)}-x_1^{(n)}} },
\\
&e^{\bar{x}_{i+1}^{(n)}-\bar{x}_i^{(n)}}\coloneqq \frac{ e^{x_{i+1}^{(n)}-x_i^{(n)}} }{ 1+e^{x_{i+1}^{(n)}-x_i^{(n)}}-e^{\bar{x}_i^{(n)}-\bar{x}_{i-1}^{(n)}} },\quad i=2,3,\ldots,N-1.
\end{aligned}\right.
\end{equation*}
Then, the $(i,j)$ entry of $U^{(n)}$ can be expressed using $\bar{p}_i^{(n)}$ and $\bar{x}_i^{(n)}$ as:
\[
(U^{(n)})_{i,j}=\sqrt{ \bar{p}_j^{(n)}(e^{\bar{x}_j^{(n)}-\bar{x}_i^{(n)}}-e^{\bar{x}_j^{(n)}-\bar{x}_{i-1}^{(n)}}) }.
\]
Thus, from \eqref{eqn:LR_U}, we can derive a recursion formula for generating the discrete-time evolution from $n$ to $n+1$ of $\bar{p}_i^{(n)}$ and $\bar{x}_i^{(n)}$. 
We can easily check that the resulting recursion formula is, in fact, equivalent to the peakon lattice equation presented by Ragnisco et al.~\cite{Ragnisco_1996}. 
%
%
%
\section{Determinantal solution and its asymptotic behavior}
\label{sec:5}
%
In this section, we present the determinantal solution to the discrete peakon equation \eqref{eqn:dpeakon}, and then clarify asymptotic behaviors as $n\to\infty$ of the discrete peakon variables $p_i^{(n)}$ and $x_i^{(n)}$. We also give numerical examples to demonstrate peakon waves motions. 

We begin by reconsidering the moments $\mu_i^{(s,n)}$ from Section \ref{sec:3}. 
Since $\mathcal{L}^{(0,0)}\coloneqq\mathcal{L}$, it is obvious that $\mu_i^{(0,0)}=\mu_i$. From Proposition \ref{prop:Gauss_quadrature_formula}, we thus see that the moments $\mu_i^{(0,0)}$ can be expressed using $\lambda_1,\lambda_2,\ldots,\lambda_N$ and $c_1,c_2,\ldots,c_N$ as:
\[
\mu_i^{(0,0)}=\sum_{j=1}^N c_j\lambda_j^i,\quad i=1,2,\ldots,2N,
\]
where $c_1>0,c_2>0,\dots,c _N>0$. 
Combining this with the moment recursion formula \eqref{eqn:moment_discrete_time_evoltion}, we derive:
\[
\mu_i^{(s,n)}=\sum_{j=1}^N c_j\lambda_j^{i+s}\prod_{\ell=0}^{n-1}(\lambda_j+\delta^{(\ell)}),\quad i=1,2,\ldots,2N.
\]
Along the same line as Kobayashi~\cite{Kobayashi_2021}, we expand the Hankel determinants $\tau_i^{(s,n)}$:
\begin{equation}\label{eqn:determinant_expand}
\begin{aligned}
&\tau_i^{(s,n)}=\sum_{1\leq j_1<j_2<\cdots<j_i\leq N}\left\{
\prod_{k=1}^i \left[ c_{j_k}\lambda_{j_k}^s \prod_{\ell=0}^{n-1}(\lambda_{j_k}+\delta^{(\ell)}) \right]\prod_{1\leq k<\ell\leq i}(\lambda_{j_k}-\lambda_{j_\ell})^2\right\},\\
&\quad i=1,2,\ldots,N.
\end{aligned}
\end{equation}
We can easily obtain a proposition regarding the positivity of the Hankel determinants $\tau_i^{(s,n)}$. 
\begin{proposition}\label{prop:existence_dpeakon}
Let us assume that $\lambda_1>0,\lambda_2>0,\ldots,\lambda_N>0$. If $\delta^{(n)}>-\min_i \lambda_i$ for any $n$, namely, $\lambda_i+\delta^{(n)}>0$ then it holds that $\tau_1^{(s,n)}>0,\tau_2^{(s,n)}>0,\ldots,\tau_N^{(s,n)}>0$. 
\end{proposition}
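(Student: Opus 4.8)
The plan is to argue directly from the Cauchy--Binet-type expansion \eqref{eqn:determinant_expand} of $\tau_i^{(s,n)}$, showing that each summand is strictly positive while the index set of the sum is nonempty. So I would not attempt any further manipulation of determinants: the expansion already exhibits $\tau_i^{(s,n)}$ as an explicit sum, and the whole task reduces to checking the sign of a generic term.

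First I would record that the roots $\lambda_1,\lambda_2,\dots,\lambda_N$ of $\phi_N(z)$ are pairwise distinct. This is built into the notation introduced before Proposition \ref{prop:Gauss_quadrature_formula} (and is the standard consequence of $\mathscr{L}$ being positive definite), so for every pair of indices $k<\ell$ the factor $(\lambda_{j_k}-\lambda_{j_\ell})^2$ appearing in \eqref{eqn:determinant_expand} is strictly positive; hence the Vandermonde-squared block $\prod_{1\le k<\ell\le i}(\lambda_{j_k}-\lambda_{j_\ell})^2$ is positive for every admissible tuple.

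Next I would fix an admissible tuple $1\le j_1<j_2<\cdots<j_i\le N$ and inspect the remaining factor $\prod_{k=1}^i\bigl[c_{j_k}\lambda_{j_k}^s\prod_{\ell=0}^{n-1}(\lambda_{j_k}+\delta^{(\ell)})\bigr]$. Here $c_{j_k}>0$ by Proposition \ref{prop:Gauss_quadrature_formula}; $\lambda_{j_k}>0$ by hypothesis, so $\lambda_{j_k}^s>0$ for any integer exponent $s$ (in particular for the negative values $s=-n$ that arise from the specialization $Q_i^{(n)}\coloneqq Q_i^{(-n,n)}$ used in Section \ref{sec:4}); and each factor $\lambda_{j_k}+\delta^{(\ell)}$ is strictly positive, which is exactly the content of the assumption $\delta^{(\ell)}>-\min_i\lambda_i$. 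Thus every summand in \eqref{eqn:determinant_expand} is a finite product of positive real numbers, and since $1\le i\le N$ the sum contains at least one term (e.g.\ the tuple $(1,2,\dots,i)$), so $\tau_i^{(s,n)}>0$ for $i=1,2,\dots,N$.

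I do not expect a real obstacle in this argument; it is essentially a positivity bookkeeping. The only two points that deserve explicit mention are that the distinctness of the $\lambda_i$ is what prevents the squared-Vandermonde factor from vanishing, and that $\lambda_{j_k}^s>0$ continues to hold when $s<0$, which is why the conclusion is valid for the shifted time variable $s=-n$ relevant to the discrete peakon equation \eqref{eqn:dpeakon}.
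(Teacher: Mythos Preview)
Your proposal is correct and follows exactly the approach implicit in the paper: the proposition is stated immediately after the expansion \eqref{eqn:determinant_expand} with the remark that the positivity is ``easily'' obtained, and no further proof is given. Your argument---checking that each summand in \eqref{eqn:determinant_expand} is a product of positive factors (distinct $\lambda_i$ for the Vandermonde block, $c_{j_k}>0$, $\lambda_{j_k}^s>0$, and $\lambda_{j_k}+\delta^{(\ell)}>0$) over a nonempty index set---is precisely the intended justification.
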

\noindent
From Proposition \ref{prop:existence_dpeakon} with \eqref{eqn:determinantal_solution_QE}, it follows that $Q_i^{(n)}>0,E_i^{(n)}>0$ and $\bar{Q}_i^{(n)}>0,\bar{E}_i^{(n)}>0$. 
Under this positivity, \eqref{eqn:QE_and_dpeakon_transformation} implies that the discrete peakon variables $p_i^{(n)}$ and $x_i^{(n)}$ are well-defined. 
Moreover, by combining \eqref{eqn:determinantal_solution_QE} with \eqref{eqn:QE_and_dpeakon_transformation}, we obtain determinantal expressions of the discrete peakon variables $p_i^{(n)}$ and $x_i^{(n)}$. 
\begin{theorem}\label{thm:solution_dpeakon}
Assume that $\lambda_1>0,\lambda_2>0,\ldots,\lambda_N>0$ and $\delta^{(0)}>-\min_i\lambda_i,\delta^{(1)}>-\min_i\lambda_i,\ldots.$ Then, the discrete peakon variables $p_i^{(n)}$ and $x_i^{(n)}$ can be expressed using the Hankel determinants as follows:
\begin{equation}\label{eqn:solution_dpeakon}
\left\{\begin{aligned}
&p_i^{(n)}=\frac{\tau_i^{(-n,n)}\tau_{i-1}^{(-n+1,n+1)}}{\tau_{i-1}^{(-n,n+1)}\tau_i^{(-n+1,n)}},\quad i=1,2,\dots,N,\\
&x_i^{(n)}=\log\frac{\tau_i^{(-n,n)}\tau_{i-1}^{(-n,n+1)}}{\tau_{i-1}^{(-n+1,n)}\tau_{i-1}^{(-n+1,n+1)}},\quad i=1,2,\dots,N.
\end{aligned}\right.
\end{equation}
\end{theorem}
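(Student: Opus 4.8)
\textbf{Proof plan for Theorem~\ref{thm:solution_dpeakon}.}
The plan is to obtain \eqref{eqn:solution_dpeakon} directly from the defining relations \eqref{eqn:QE_and_dpeakon_transformation} together with the determinantal formulas \eqref{eqn:determinantal_solution_QE} for $Q_i^{(s,n)},E_i^{(s,n)},\bar Q_i^{(s,n)},\bar E_i^{(s,n)}$, specialized to $s=-n$ via the abbreviations $Q_i^{(n)}=Q_i^{(-n,n)}$ etc. First I would note that well-definedness of $p_i^{(n)}$ and $x_i^{(n)}$ is already secured: Proposition~\ref{prop:existence_dpeakon} gives $\tau_i^{(s,n)}>0$ under the hypotheses $\lambda_j>0$ and $\delta^{(n)}>-\min_i\lambda_i$, hence all the $Q,E,\bar Q,\bar E$ are positive, the quantity $1+\delta^{(n)}p_i^{(n)}=\bar Q_i^{(n)}/Q_i^{(n)}$ is positive, and the logarithm in the second line of \eqref{eqn:solution_dpeakon} is taken of a positive Hankel-determinant ratio.

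The main computation is for $p_i^{(n)}$. From the first line of \eqref{eqn:QE_and_dpeakon_transformation} we have $\delta^{(n)}p_i^{(n)}=\bar Q_i^{(n)}/Q_i^{(n)}-1$. Substituting the determinantal expressions
\[
\bar Q_i^{(n)}=\frac{\tau_{i-1}^{(-n,n)}\tau_i^{(-n,n+1)}}{\tau_i^{(-n,n)}\tau_{i-1}^{(-n,n+1)}},\qquad
Q_i^{(n)}=\frac{\tau_{i-1}^{(-n,n)}\tau_i^{(-n+1,n)}}{\tau_i^{(-n,n)}\tau_{i-1}^{(-n+1,n)}},
\]
the common factor $\tau_{i-1}^{(-n,n)}/\tau_i^{(-n,n)}$ cancels and one is left with
\[
\delta^{(n)}p_i^{(n)}=\frac{\tau_i^{(-n,n+1)}\tau_{i-1}^{(-n+1,n)}}{\tau_{i-1}^{(-n,n+1)}\tau_i^{(-n+1,n)}}-1 .
\]
The crux is then a Hankel-determinant (Jacobi/Pl\"ucker-type) identity expressing the numerator $\tau_i^{(-n,n+1)}\tau_{i-1}^{(-n+1,n)}-\tau_{i-1}^{(-n,n+1)}\tau_i^{(-n+1,n)}$ as $\delta^{(n)}\,\tau_i^{(-n,n)}\tau_{i-1}^{(-n+1,n+1)}$. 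This is the natural bilinear identity attached to the two shift operators: the $(-n+1,\,n)\!\to\!(-n,\,n+1)$ move multiplies the functional by $(z+\delta^{(n)})/z$, so the factor $\delta^{(n)}$ appears exactly as the ``difference'' of the two shifts applied at adjacent Hankel sizes. I would prove it either by the Desnanot--Jacobi (Dodgson condensation) identity applied to the relevant $(i+1)\times(i+1)$ moment matrix with the extra column $\{\delta^{(n)}\mu_{j}^{(\cdot)}+\mu_{j+1}^{(\cdot)}\}$, or, more transparently, by substituting the explicit expansion \eqref{eqn:determinant_expand} and matching the Vandermonde/weight structure on both sides. Dividing by $\delta^{(n)}$ then yields the first line of \eqref{eqn:solution_dpeakon}; note that $\delta^{(n)}$ may vanish, but since both sides are polynomial (indeed rational with no pole) in $\delta^{(n)}$, the identity persists by continuity, and the $\delta^{(n)}=0$ case just says $p_i^{(n)}$ equals the standard $qd$-type ratio.

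For $x_i^{(n)}$, the plan is to use the second line of \eqref{eqn:QE_and_dpeakon_transformation} recursively: since $e^{x_1^{(n)}}$ is fixed by the normalization implicit in $U^{(n)}$ (equivalently by the choice made in deriving \eqref{eqn:dpeakon_proof3} with $C=1$), one has $x_1^{(n)}=\log\big(\tau_1^{(-n,n)}\tau_0^{(-n,n+1)}/(\tau_0^{(-n+1,n)}\tau_0^{(-n+1,n+1)})\big)=\log\mu_0^{(-n,n)}$, and then $e^{x_{i+1}^{(n)}-x_i^{(n)}}=p_i^{(n)}E_i^{(n)}/(p_{i+1}^{(n)}Q_{i+1}^{(n)})$ telescopes. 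Substituting the determinantal forms of $E_i^{(n)}=\tau_{i+1}^{(-n,n)}\tau_{i-1}^{(-n+1,n)}/(\tau_i^{(-n,n)}\tau_i^{(-n+1,n)})$ and $Q_{i+1}^{(n)}$, together with the just-proved formula for $p_i^{(n)}$, all intermediate Hankel determinants cancel in the product and one recovers the claimed ratio for $e^{x_i^{(n)}}$; taking logarithms finishes the proof. I expect the single Hankel bilinear identity of the previous paragraph to be the only real obstacle --- everything else is cancellation of common factors and an induction on $i$; one should also double-check the boundary conventions $\tau_{-1}=0$, $\tau_0=1$ do not spoil the $i=1$ case (they do not, since $\tau_{-1}$ never appears in \eqref{eqn:solution_dpeakon}).
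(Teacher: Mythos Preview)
Your plan is correct and would work, but it takes a different route from the paper for the first formula in \eqref{eqn:solution_dpeakon}. The paper avoids your bilinear Hankel identity entirely by evaluating the first lines of the spectral transformations \eqref{eqn:LR} and \eqref{eqn:shiftedLR} at $z=0$: these give $Q_i^{(s,n)}\phi_{i-1}^{(s,n)}(0)=-\phi_i^{(s,n)}(0)$ and $\bar Q_i^{(s,n)}\phi_{i-1}^{(s,n)}(0)=\delta^{(n)}\phi_{i-1}^{(s,n+1)}(0)-\phi_i^{(s,n)}(0)$, whence $\bar Q_i^{(n)}/Q_i^{(n)}=1-\delta^{(n)}\phi_{i-1}^{(-n,n+1)}(0)/\phi_i^{(-n,n)}(0)$ and therefore $p_i^{(n)}=-\phi_{i-1}^{(-n,n+1)}(0)/\phi_i^{(-n,n)}(0)$ directly. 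The elementary observation $\phi_i^{(s,n)}(0)=(-1)^i\tau_i^{(s+1,n)}/\tau_i^{(s,n)}$ (expand the determinant \eqref{eqn:phi_determinant_new} along its last row at $z=0$) then yields the Hankel ratio immediately. In effect, your identity $\tau_i^{(s,n+1)}\tau_{i-1}^{(s+1,n)}-\tau_{i-1}^{(s,n+1)}\tau_i^{(s+1,n)}=\delta^{(n)}\tau_i^{(s,n)}\tau_{i-1}^{(s+1,n+1)}$ is precisely what these two evaluations encode, so the paper gets it for free from Proposition~\ref{prop:spectral_transformation} rather than reproving it via Desnanot--Jacobi or the expansion \eqref{eqn:determinant_expand}. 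Your approach is more self-contained at the $\tau$-function level, at the cost of having to establish that extra identity; the paper's is shorter because the relevant work was already done in Section~\ref{sec:3}. For the second formula in \eqref{eqn:solution_dpeakon} the two arguments coincide: substitute the determinantal $E_i^{(n)}$, $Q_{i+1}^{(n)}$ and the just-derived $p_i^{(n)}$ into the second line of \eqref{eqn:QE_and_dpeakon_transformation} and telescope.
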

\begin{proof}
Applying $z=0$ in Proposition \ref{prop:spectral_transformation}, we obtain:
\begin{equation*}\left\{\begin{aligned}
&0=\phi_i^{(s,n)}(0)+Q_i^{(s,n)}\phi_{i-1}^{(s,n)}(0),\quad i=1,2,\ldots,N,\\
&\delta^{(n)}\phi_{i-1}^{(s,n+1)}(0)=\phi_i^{(s,n)}(0)+\bar{Q}_i^{(s,n)}\phi_{i-1}^{(s,n)}(0),\quad i=1,2,\ldots,N,
\end{aligned}\right.\end{equation*}
This immediately leads to:
\begin{equation}\label{eqn:Q_Qbar}
1-\delta^{(n)}\frac{\phi_{i-1}^{(s,n+1)}(0)}{\phi_i^{(s,n)}(0)}=\frac{\bar{Q}_i^{(s,n)}}{Q_i^{(s,n)}},\quad i=1,2,\ldots,N.
\end{equation}
Recalling that $Q_i^{(n)}=Q_i^{(-n,n)}$ and $\bar{Q}_i^{(n)}=\bar{Q}_i^{(-n,n)}$, and combining the first equation of \eqref{eqn:QE_and_dpeakon_transformation} with \eqref{eqn:Q_Qbar}, we derive:
\begin{equation}\label{eqn:solution_dpeakon_proof1}
p_i^{(n)}=-\frac{\phi_{i-1}^{(-n,n+1)}(0)}{\phi_i^{(-n,n)}(0)},\quad i=1,2,\ldots,N.
\end{equation}
Moreover, by applying $z=0$ in \eqref{eqn:phi_determinant_new} and using the first equation of \eqref{eqn:moment_discrete_time_evoltion}, we obtain $\phi_i^{(s,n)}(0)=(-1)^i\tau_i^{(s+1,n)}/\tau_i^{(s,n)}$. Combining this with \eqref{eqn:solution_dpeakon_proof1}, we thus have the first equation of \eqref{eqn:solution_dpeakon}. 

Using \eqref{eqn:determinantal_solution_QE} and the first equation of \eqref{eqn:solution_dpeakon}, we can rewrite the second equation of \eqref{eqn:QE_and_dpeakon_transformation} as:
\[e^{x_{i+1}^{(n)}-x_i^{(n)}}=\frac{ \tau_{i+1}^{(-n,n)}\tau_{i-1}^{(-n+1,n)}\tau_i^{(-n,n+1)}\tau_{i-1}^{(-n+1,n+1)} }{ \tau_i^{(-n,n)}\tau_i^{(-n+1,n)}\tau_{i-1}^{(-n,n+1)}\tau_i^{(-n+1,n+1)} },\quad i=1,2,\ldots,N-1.\]
Taking the logarithm of the both sides gives:
\[
x_{i+1}^{(n)}-x_i^{(n)}=
\log\frac{ \tau_{i+1}^{(-n,n)}\tau_i^{(-n,n+1)} }{ \tau_i^{(-n+1,n)}\tau_i^{(-n+1,n+1)} }
-\log\frac{ \tau_i^{(-n,n)}\tau_{i-1}^{(-n,n+1)} }{ \tau_{i-1}^{(-n+1,n)}\tau_{i-1}^{(-n+1,n+1)} },\quad i=1,2,\ldots,N-1.
\]
We thus arrive at the second equation of \eqref{eqn:solution_dpeakon}.
\end{proof}

Taking the limit as $n\to\infty$ in \eqref{eqn:determinant_expand} with $s=-n+\bar{s}$, we obtain the asymptotic expansion as $n\to\infty$ of the Hankel determinants $\tau_i^{(-n+\bar{s},n)}$.
\begin{proposition}\label{prop:determinat_expansion}
Assume that nonzero constants $\lambda_1,\lambda_2,\ldots,\lambda_N$ satisfy $|1+\delta^{(n)}/\lambda_1|>|1+\delta^{(n)}/\lambda_2|>\cdots>|1+\delta^{(n)}/\lambda_N|$.  
Then, there are constants $\varrho_1,\varrho_2,\ldots,\varrho_{N-1}\in[0,1)$ satisfying $\varrho_i>|1+\delta^{(n)}/\lambda_{i+1}|/|1+\delta^{(n)}/\lambda_i|$, and it holds that, for sufficiently large $n$:
\begin{equation}\label{eqn:asymptotic_determinant_expand}
\begin{aligned}
&\tau_i^{(-n+\bar{s},n)}=\left\{\prod_{k=1}^i \left[c_k\lambda_k^{\bar{s}}\prod_{\ell=0}^{n-1}\left(1+\frac{\delta^{(\ell)}}{\lambda_k}\right)\right]\prod_{1\leq k<\ell\leq i}(\lambda_k-\lambda_\ell)^2\right\}\left( 1+O(\varrho_i^n) \right),\\
&\quad i=0,1,\ldots,N-1.
\end{aligned}
\end{equation}
\end{proposition}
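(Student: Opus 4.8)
The plan is to substitute $s=-n+\bar s$ directly into the finite-sum formula \eqref{eqn:determinant_expand} for $\tau_i^{(s,n)}$ and then to isolate the single summand that dominates all the others by an exponentially small factor. First I would use the identity $\lambda_{j_k}^{-n}\prod_{\ell=0}^{n-1}(\lambda_{j_k}+\delta^{(\ell)})=\prod_{\ell=0}^{n-1}\bigl(1+\delta^{(\ell)}/\lambda_{j_k}\bigr)$ to rewrite the generic summand of $\tau_i^{(-n+\bar s,n)}$ as $\prod_{k=1}^i\bigl[c_{j_k}\lambda_{j_k}^{\bar s}\prod_{\ell=0}^{n-1}(1+\delta^{(\ell)}/\lambda_{j_k})\bigr]\prod_{1\le k<\ell\le i}(\lambda_{j_k}-\lambda_{j_\ell})^2$, and then factor out the summand $T_i^{(n)}$ indexed by $(j_1,\dots,j_i)=(1,2,\dots,i)$, which is exactly the quantity in braces on the right-hand side of \eqref{eqn:asymptotic_determinant_expand}. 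One checks that $T_i^{(n)}\neq0$ for $i\le N-1$: the $c_k$ are positive, the $\lambda_k$ are nonzero and distinct, and for $k\le i\le N-1$ no factor $1+\delta^{(\ell)}/\lambda_k$ can vanish, since the strict ordering $\lvert1+\delta^{(\ell)}/\lambda_1\rvert>\cdots>\lvert1+\delta^{(\ell)}/\lambda_N\rvert$ forces any vanishing factor to be the one of smallest modulus, i.e.\ the index-$N$ one. This is precisely why the proposition is restricted to $i\le N-1$.

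It then remains to bound, for large $n$, the ratio $R_J^{(n)}$ of the summand indexed by an $i$-subset $J=\{j_1<\dots<j_i\}\neq\{1,\dots,i\}$ to $T_i^{(n)}$. Writing $w_k^{(n)}\coloneqq\prod_{\ell=0}^{n-1}\lvert1+\delta^{(\ell)}/\lambda_k\rvert$, the hypothesis gives $w_1^{(n)}>\cdots>w_N^{(n)}$, and $\lvert R_J^{(n)}\rvert$ equals an $n$-independent constant $C_J$ (a ratio of products of the $c$'s, of powers $\lambda^{\bar s}$, and of the squared Vandermonde factors, all denominators nonzero) times $\prod_{k=1}^i w_{j_k}^{(n)}/w_k^{(n)}$. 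The key combinatorial step is that $\{k:j_k>k\}$ is upward closed — from $j_{k+1}>j_k\ge k+1$ one gets $j_{k+1}\ge k+2$ — so, setting $k_0\coloneqq\min\{k:j_k>k\}$, one has $j_k=k$ for $k<k_0$ and $j_k\ge k+1$ for $k\ge k_0$; hence, using monotonicity of the $w_k^{(n)}$ in $k$ and a telescoping product, $\prod_{k=1}^i w_{j_k}^{(n)}/w_k^{(n)}=\prod_{k=k_0}^i w_{j_k}^{(n)}/w_k^{(n)}\le\prod_{k=k_0}^i w_{k+1}^{(n)}/w_k^{(n)}=w_{i+1}^{(n)}/w_{k_0}^{(n)}\le w_{i+1}^{(n)}/w_i^{(n)}$. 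Finally $w_{i+1}^{(n)}/w_i^{(n)}=\prod_{\ell=0}^{n-1}\bigl(\lvert1+\delta^{(\ell)}/\lambda_{i+1}\rvert/\lvert1+\delta^{(\ell)}/\lambda_i\rvert\bigr)\le\varrho_i^n$ by the defining property of $\varrho_i$, so $\lvert R_J^{(n)}\rvert\le C_J\,\varrho_i^n$; summing over the finitely many such $J$ yields $\tau_i^{(-n+\bar s,n)}=T_i^{(n)}\bigl(1+\sum_J R_J^{(n)}\bigr)=T_i^{(n)}\bigl(1+O(\varrho_i^n)\bigr)$, which is \eqref{eqn:asymptotic_determinant_expand}. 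The case $i=0$ is the trivial identity $\tau_0^{(-n+\bar s,n)}=1$.

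The computations are all routine; the one point that needs some care — and hence the main obstacle — is this combinatorial/telescoping estimate, which must show that \emph{every} competing index set (not just the nearest competitor $\{1,\dots,i-1,i+1\}$) is suppressed by the single common factor $\varrho_i^n$, together with the bookkeeping that identifies when the leading term $T_i^{(n)}$ is genuinely nonzero, namely exactly in the stated range $i\le N-1$.
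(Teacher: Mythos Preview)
Your proof is correct and follows essentially the same approach as the paper: substitute $s=-n+\bar s$ into \eqref{eqn:determinant_expand}, rewrite each summand with the factors $\prod_{\ell}(1+\delta^{(\ell)}/\lambda_{j_k})$, and identify the term indexed by $(1,\dots,i)$ as dominant. The paper's argument is only a sketch that stops after displaying the sum and asserting the expansion; you have supplied the missing details --- the nonvanishing of the leading term for $i\le N-1$ and the telescoping combinatorial estimate bounding every competing summand by $\varrho_i^n$ --- that make the $O(\varrho_i^n)$ claim rigorous.
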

\begin{proof}
Equation \eqref{eqn:determinant_expand} with $s=-n+\bar{s}$ immediately leads to:
\[
\tau_i^{(-n+\bar{s},n)}=\sum_{1\leq j_1<j_2<\cdots<j_i\leq N}
C_{j_1,j_2,\ldots,j_i}^{(\bar{s})}
\prod_{k=1}^i \prod_{\ell=0}^{n-1}\left(1+\frac{\delta^{(\ell)}}{\lambda_{j_k}}\right),\quad i=1,2,\ldots,N-1,
\]
where $C_{j_1,j_2,\ldots,j_i}^{(\bar{s})}$ are constants defined using $c_{j_1},c_{j_2},\ldots,c_{j_i}$ and $\lambda_{j_1},\lambda_{j_2},\ldots,\lambda_{j_i}$ according to:
\[
C_{j_1,j_2,\ldots,j_i}^{(\bar{s})}\coloneqq 
\prod_{k=1}^i c_{j_k}\lambda_{j_k}^{\bar{s}} \cdot \prod_{1\leq k<\ell\leq i}(\lambda_{j_k}-\lambda_{j_\ell})^2,\quad i=1,2,\ldots,N-1.
\]
Thus, for sufficiently large $n$, we can expand the Hankel determinants $\tau_i^{(-n+\bar{s},n)}$ as \eqref{eqn:asymptotic_determinant_expand}. 
\end{proof}

Combining Proposition \ref{prop:determinat_expansion} with Theorem \ref{thm:solution_dpeakon}, we observe that, as $n\to\infty$:
\begin{equation}\label{eqn:asymptotic_behaviors_dpeakon_proof1}
\left\{\begin{aligned}
&p_i^{(n)}=\frac{1}{\lambda_i},\quad i=1,2,\ldots,N,\\
&x_i^{(n)}=\log \left[ c_i\prod_{k=1}^{i-1}\left(1-\frac{\lambda_i}{\lambda_k}\right)^2
\cdot\prod_{\ell=0}^{n-1}\left(1+\frac{\delta^{(\ell)}}{\lambda_i}\right)
\frac{ 1+O(\varrho_i^{n}) }{ 1+O(\varrho_{i-1}^n) } \right],\quad i=1,2,\ldots,N.
\end{aligned}\right.
\end{equation}
We therefore arrive at asymptotic behavior as $n\to\infty$ in the discrete peakon equation \eqref{eqn:dpeakon}. 
\begin{theorem}\label{thm:asymptotic_behaviors_dpeakon}
Assume that $\lambda_1>0,\lambda_2>0,\ldots,\lambda_N>0$ and $|1+\delta^{(n)}/\lambda_1|>|1+\delta^{(n)}/\lambda_2|>\cdots>|1+\delta^{(n)}/\lambda_N|$. Moreover, assume that the parameter sequence $\{\delta^{(n)}\}$ satisfies $\delta^{(n)}>-\min_i\lambda_i$ and converges to some constant $\delta^\ast$ as $n\to\infty$. Then, it holds that:
\begin{equation}\label{eqn:asymptotic_behaviors_dpeakon}
\left\{\begin{aligned}
&\lim_{n\to\infty}p_i^{(n)}=\frac{1}{\lambda_i},\quad i=1,2,\ldots,N,\\
&\lim_{n\to\infty}x_i^{(n+1)}-x_i^{(n)}=\log\left(1+\frac{\delta^\ast}{\lambda_i}\right),\quad i=1,2,\ldots,N,\\
&\lim_{n\to\infty}x_{i+1}^{(n)}-x_i^{(n)}=-\infty,\quad i=1,2,\ldots,N-1.
\end{aligned}\right.
\end{equation}
\end{theorem}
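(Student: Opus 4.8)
The plan is to derive everything from the asymptotic formulas \eqref{eqn:asymptotic_behaviors_dpeakon_proof1}, which were already obtained by combining Proposition~\ref{prop:determinat_expansion} with Theorem~\ref{thm:solution_dpeakon}. The first limit in \eqref{eqn:asymptotic_behaviors_dpeakon} is immediate: the first line of \eqref{eqn:asymptotic_behaviors_dpeakon_proof1} asserts $p_i^{(n)}=1/\lambda_i$ up to lower-order terms, and the error terms $O(\varrho_i^n)$ (with $\varrho_i\in[0,1)$) vanish as $n\to\infty$, so $\lim_{n\to\infty}p_i^{(n)}=1/\lambda_i$. I should double-check that the hypothesis $\lambda_i>0$ guarantees $1+\delta^{(n)}/\lambda_i>0$ for all large $n$ (using $\delta^{(n)}\to\delta^\ast$ and $\delta^{(n)}>-\min_i\lambda_i$), so that Proposition~\ref{prop:determinat_expansion} applies and all quantities stay well-defined and positive.

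For the second limit I would start from the second line of \eqref{eqn:asymptotic_behaviors_dpeakon_proof1}, write out $x_i^{(n+1)}-x_i^{(n)}$ as a difference of logarithms, and watch the telescoping: the factor $c_i\prod_{k=1}^{i-1}(1-\lambda_i/\lambda_k)^2$ is independent of $n$ and cancels, the product $\prod_{\ell=0}^{n-1}(1+\delta^{(\ell)}/\lambda_i)$ contributes exactly the single new factor $(1+\delta^{(n)}/\lambda_i)$ when passing from $n$ to $n+1$, and the ratio $(1+O(\varrho_i^n))/(1+O(\varrho_{i-1}^n))$ tends to $1$ both at $n$ and at $n+1$. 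Hence
\[
x_i^{(n+1)}-x_i^{(n)}=\log\!\left(1+\frac{\delta^{(n)}}{\lambda_i}\right)+\log\frac{1+O(\varrho_i^{n+1})}{1+O(\varrho_i^{n})}\cdot\frac{1+O(\varrho_{i-1}^{n})}{1+O(\varrho_{i-1}^{n+1})},
\]
and letting $n\to\infty$ with $\delta^{(n)}\to\delta^\ast$ and continuity of $\log$ gives $\log(1+\delta^\ast/\lambda_i)$. One must note $1+\delta^\ast/\lambda_i>0$ so the logarithm is continuous there; this again follows from $\delta^\ast\ge-\min_i\lambda_i$ combined with the ordering hypothesis, which forces $1+\delta^\ast/\lambda_i\ne0$.

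For the third limit I would again use the second line of \eqref{eqn:asymptotic_behaviors_dpeakon_proof1} to compute $x_{i+1}^{(n)}-x_i^{(n)}$; the constant prefactors combine into a fixed constant, the error ratio stays bounded, and the surviving $n$-dependent term is $\sum_{\ell=0}^{n-1}\log\big[(1+\delta^{(\ell)}/\lambda_{i+1})/(1+\delta^{(\ell)}/\lambda_i)\big]$. Since $\delta^{(\ell)}\to\delta^\ast$ and, by the ordering hypothesis $|1+\delta^{(\ell)}/\lambda_{i+1}|<|1+\delta^{(\ell)}/\lambda_i|$ (here all quantities are eventually positive, so the absolute values can be dropped), each summand is eventually bounded above by a strictly negative constant; hence the partial sums diverge to $-\infty$, giving $\lim_{n\to\infty}(x_{i+1}^{(n)}-x_i^{(n)})=-\infty$. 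The main obstacle is the bookkeeping of the $O(\varrho_i^n)$ error factors: I need to verify that they appear multiplicatively (not additively inside logarithms in a way that could blow up) and that the hypotheses indeed keep every relevant factor $1+\delta^{(n)}/\lambda_i$ bounded away from $0$, so that all logarithms are continuous and the ``$1+O(\varrho_i^n)$'' factors genuinely tend to $1$; once this is pinned down, the three limits follow by the elementary arguments above.
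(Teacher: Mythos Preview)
Your proposal is correct and follows exactly the route the paper takes: the paper's entire argument for Theorem~\ref{thm:asymptotic_behaviors_dpeakon} consists of writing down the asymptotic formulas~\eqref{eqn:asymptotic_behaviors_dpeakon_proof1} (obtained by combining Proposition~\ref{prop:determinat_expansion} with Theorem~\ref{thm:solution_dpeakon}) and then stating the theorem, so you are simply spelling out the elementary limit computations that the paper leaves implicit. One small remark: for the third limit, rather than arguing via $\delta^{(\ell)}\to\delta^\ast$ (which is delicate if $\delta^\ast=0$), you can use directly that Proposition~\ref{prop:determinat_expansion} already supplies a constant $\varrho_i\in[0,1)$ with $(1+\delta^{(\ell)}/\lambda_{i+1})/(1+\delta^{(\ell)}/\lambda_i)<\varrho_i$ for all $\ell$, so each summand is bounded above by $\log\varrho_i<0$ uniformly.
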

\noindent
Since the values of $\delta^{(n)}$ can be arbitrarily set, they can be chosen to satisfy $\delta^{(n)}>-\min_i\lambda_i$. Thus, if $\lambda_1>0,\lambda_2>0,\ldots,\lambda_N>0$ in the initial setting, then we can draw peakon wave motion by using the discrete peakon equation \eqref{eqn:dpeakon}. 
By replacing continuous-time $t$ with discrete-time $n$ in \eqref{eqn:CH_solution}, we can rewrite the $N$-peakon solution to the CH equation \eqref{eqn:CH} as:
\begin{equation}\label{eqn:dCH_solution}
u^{(n)}(x) \coloneqq\sum_{i=1}^N p_i^{(n)} e^{-\vert x-x_i^{(n)}\vert}. 
\end{equation}
Thus, we can refer to \eqref{eqn:dpeakon} as the discrete CH peakon equation. Theorem \ref{thm:asymptotic_behaviors_dpeakon} implies that the amplitude and position of the $i$th peakon respectively converge to $1/\lambda_i$ and $\log(1+\delta^\ast/\lambda_i)$ as $n\to\infty$. We also recognize that the positions of peakons satisfy $x_1^{(n)}>x_2^{(n)}>\cdots>x_N^{(n)}$ as $n\to\infty$. 

We now present three numerical examples concerning peakon wave motions described by the discrete peakon equation \eqref{eqn:dpeakon}. We used a computer with a Mac OS Ventura (ver.~13.0.1 (22A400)) operating system and an Apple M1 CPU. We performed double-precision floating-point operations using the numerical computation software Python 3.9.7.

\begin{figure}[tb]
\includegraphics[width=0.6\textwidth]{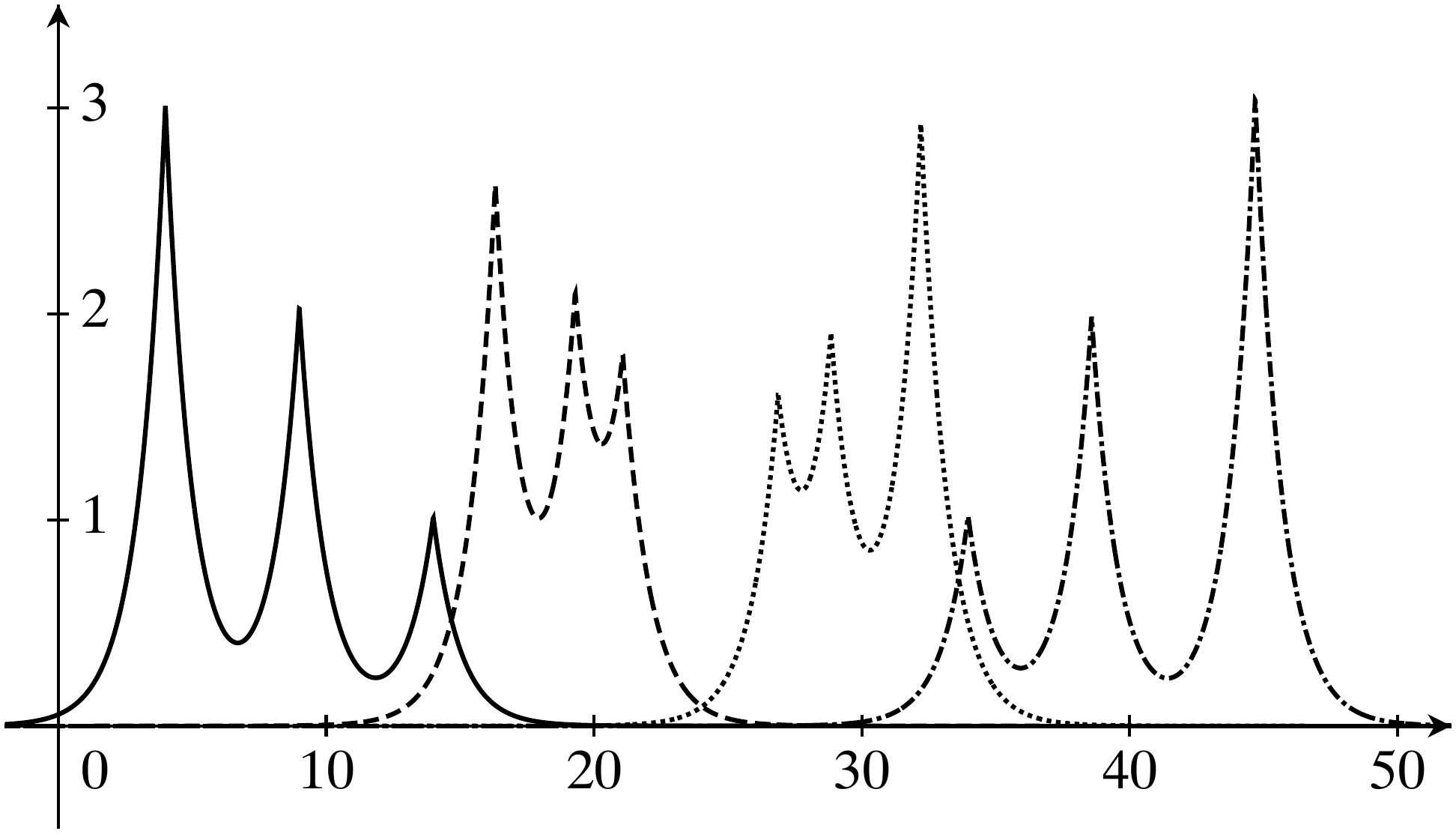}
\centering
\caption{2D motion of three peakon waves. Peakon position ($x$ axis) versus function values of $u^{(0)}(x),u^{(9)}(x),u^{(18)}(x)$ and $u^{(27)}(x)$ ($y$ axis). Solid line: $u^{(0)}(x)$; dashed line: $u^{(9)}(x)$; dotted line: $u^{(18)}(x)$ and dash-dotted line: $u^{(27)}(x)$.} \label{fig2:3peakons}  
\end{figure}
\begin{figure}[tb]
\centering
\includegraphics[width=0.6\textwidth]{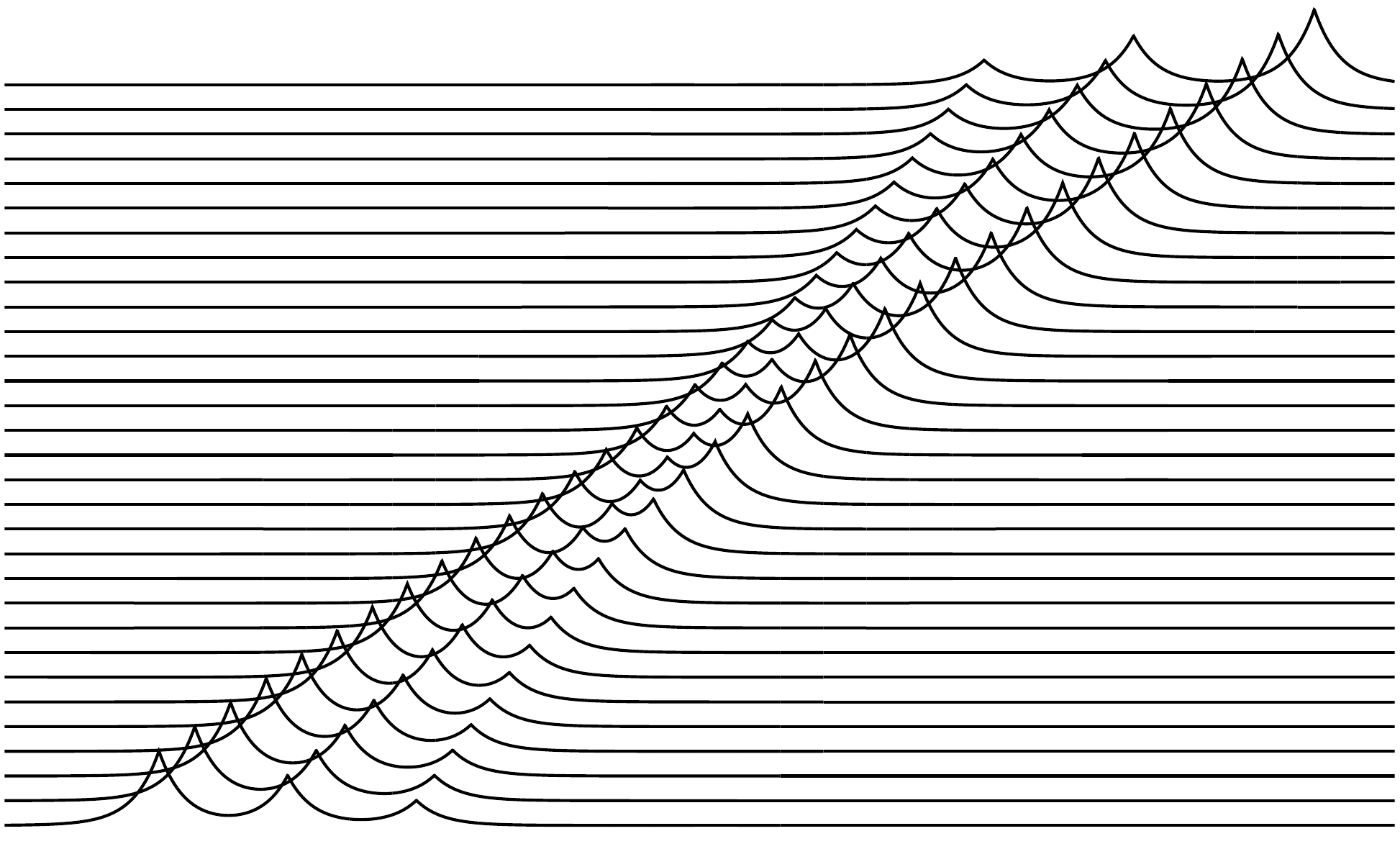}
\caption{3D plot of the first example.} \label{fig3:3peakons}
\end{figure}
We first handle the case where $p_1^{(0)}=1,p_2^{(0)}=2,p_3^{(0)}=3,x_1^{(0)}=14,x_2^{(0)}=9,x_3^{(0)}=4$ and $N=3$. We fix the parameters $\delta^{(n)}$ to $\delta^{(n)}=1$ at any discrete-time. 
In this setting, by using the discrete peakon equation \eqref{eqn:dpeakon}, we generated the sequences $\{p_i^{(n)}\}$ and $\{x_i^{(n)}\}$. 
Figure~\ref{fig2:3peakons} displays the peakon waves at discrete-time $n=0,9,18,27$, namely, four functions of $u^{(0)}(x),u^{(9)}(x),u^{(18)}(x)$ and $u^{(27)}(x)$. 
Figure~\ref{fig3:3peakons} displays a 3D plot corresponding to Fig.~\ref{fig2:3peakons}.
From Fig.~\ref{fig2:3peakons} and Fig.~\ref{fig3:3peakons}, we can see that leftmost peakon with the highest amplitude at $n=0$ moves to the rightmost under discrete-time evolution from $n=0$ to $n=27$. 
We also compute that, for $n=136,137,\dots,$ $p_1^{(n)}=p_1^{(135)}=3.0530128591497787,$ $p_2^{(n)}=p_2^{(135)}=1.9805447136469283$ and $p_3^{(n)}=p_3^{(135)}=0.98672436809518$. 
Thus, by using the first equation of \eqref{eqn:asymptotic_behaviors_dpeakon}
, we can estimate the values of $1/\lambda_1,1/\lambda_2$ and $1/\lambda_3$ as 
$3.053012859149779,1.980544713646928$ and $0.986724368095180$, respectively. 
The values of $\vert (x_1^{(136)}-x_1^{(135)}-\log(1+\delta^\ast/\lambda_1))/\log(1+\delta^\ast/\lambda_1)\vert,\vert (x_2^{(136)}-x_2^{(135)}-\log(1+\delta^\ast/\lambda_2))/\log(1+\delta^\ast/\lambda_2)\vert$ and $\vert (x_3^{(136)}-x_3^{(135)}-\log(1+\delta^\ast/\lambda_3))/\log(1+\delta^\ast/\lambda_3)\vert$ are 
$3.173288587851057\times10^{-16},$ $2.033178006120378\times10^{-16}$ and $1.617252245736617\times10^{-16}$, respectively. 
In other words, for $i=1,2,3$, the value of $x_i^{(n+1)}-x_i^{(n)}$ approaches that of $\log(1+\delta^\ast/\lambda_i)$ at sufficiently large $n$, which implies the second equation of \eqref{eqn:asymptotic_behaviors_dpeakon}. 
We can verify that the peakon motion in the case where $N\neq3$ is similar to that in the case where $N=3$. 

The second example concerns an interaction of two peakons with positive and negative amplitude. The peakon with negative amplitude is often called an antipeakon. 
We consider the case where $p_1^{(0)}=-2,p_2^{(0)}=5,x_1^{(0)}=2+\log10,x_2^{(0)}=2$, and $N=2$, and set the discretization parameters $\delta^{(n)}=0.1$ at any discrete-time $n$. 
Discrete-time evolutions from $n=0$ to $n=6$ in the discrete peakon equation \eqref{eqn:dpeakon} give $p_1^{(6)}=-23.87217732451234,p_2^{(6)}=-18.65040845375860,e^{x_1^{(6)}}=60.73038625216492$ and $e^{x_2^{(6)}}=-85.90195612627288$. 
Remarkably, there does not exist an $x_2^{(6)}$ that satisfies $e^{x_2^{(6)}}=-85.90195612627288$. The discrete peakon equation \eqref{eqn:dpeakon}, however, generates discrete-time evolutions even after discrete-time $n=6$. 
This is because the discrete-time evolutions in \eqref{eqn:dpeakon} require the value of $e^{x_2^{(n)}}$ but not $x_i^{(n)}$. 
Figure~\ref{fig4:2peakons} and Figure~\ref{fig5:2peakons} depict the motions of a peakon and an antipeakon using the discrete peakon equation \eqref{eqn:dpeakon}. From Figs~\ref{fig4:2peakons} and \ref{fig5:2peakons}, we can observe that, as discrete-time passes, the peakon and antipeakon move closer, interact, swap position with each other, and then move apart. 
\begin{figure}[tb]
\centering
\includegraphics[width=0.6\textwidth]{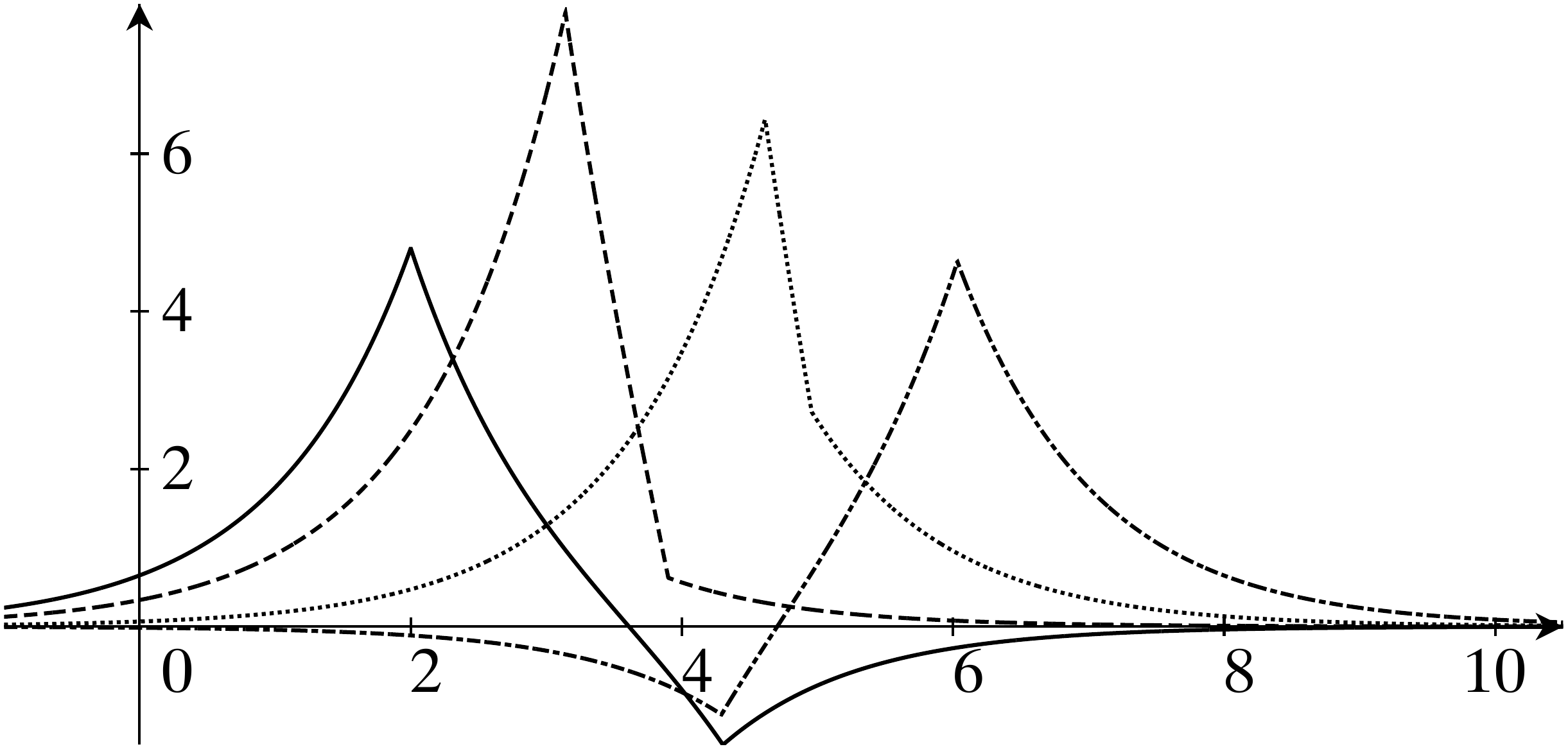}
\caption{2D motion of a peakon wave and an antipeakon wave showing peakon position ($x$ axis) versus function values of $u^{(0)}(x),u^{(4)}(x),u^{(8)}(x)$ and $u^{(12)}(x)$ ($y$ axis). Solid line: $u^{(0)}(x)$; dashed line: $u^{(4)}(x)$; dotted line: $u^{(8)}(x)$ and dash-dotted line: $u^{(12)}(x)$.} \label{fig4:2peakons}  
\end{figure}
\begin{figure}[tb]
\centering
\includegraphics[width=0.6\textwidth]{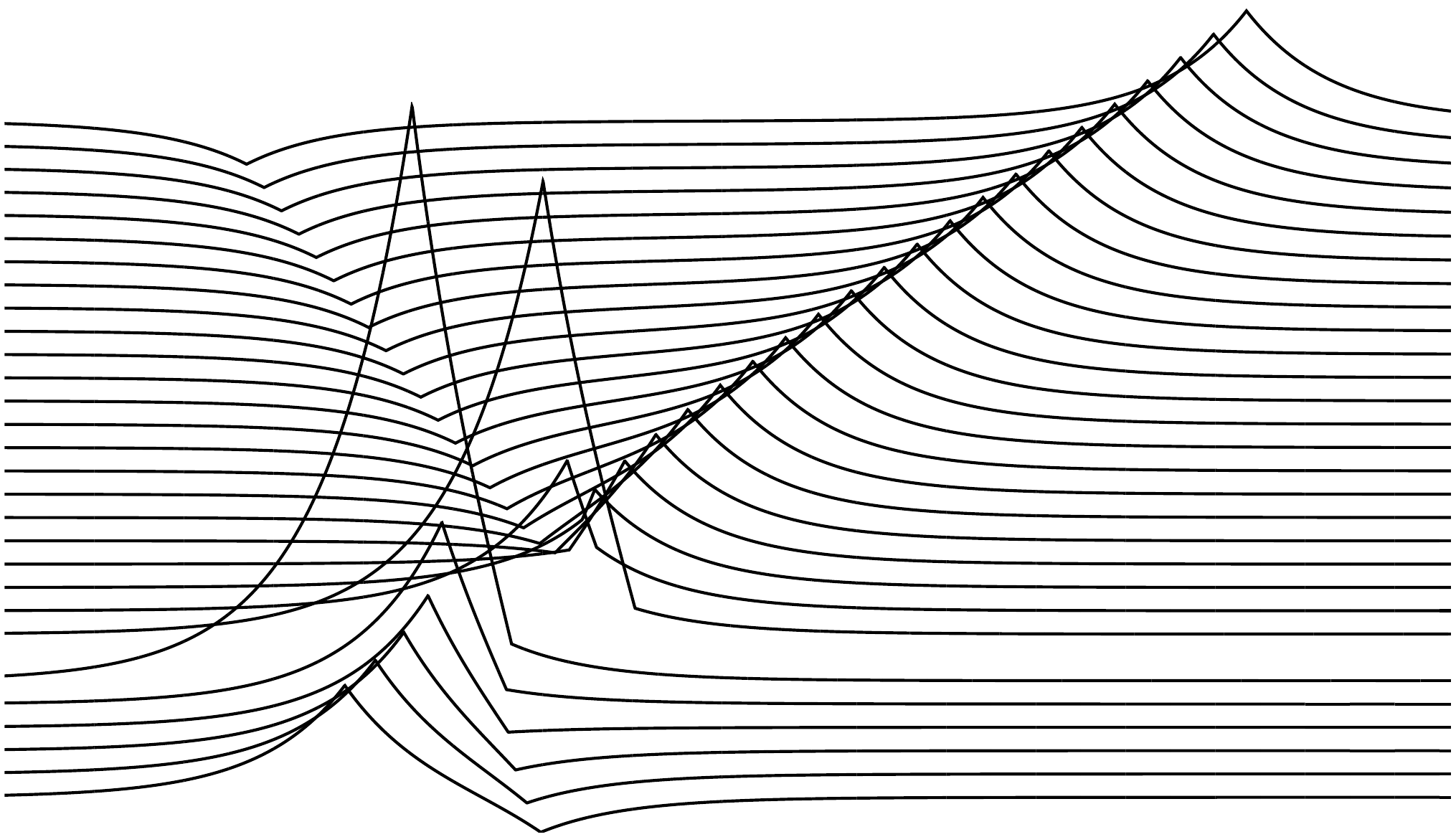}
\caption{3D plot of the second example.} \label{fig5:2peakons} 
\end{figure}

In the third example, we closely observe the motions of a peakon and an antipeakon approaching each other until just before they collide. 
Once again, we consider the case where $p_1^{(0)}=-2,p_2^{(0)}=5,x_1^{(0)}=2+\log10,x_2^{(0)}=2$ and $N=2$. We set the discretization parameters $\delta^{(n)}=0.1$ for $n=0,1,\ldots,4$ and to $\delta^{(n)}=0.02$ for $n=5,6,\ldots.$ 
It is worth noting here that we can change the parameters when necessary. 
Obviously, the discrete-time evolutions from $n=0$ to $n=4$ in \eqref{eqn:dpeakon} generate the same sequences as in the second example. We also obtain $p_1^{(11)}=-76.14039569735561,p_2^{(11)}=-150.8018405883023,e^{x_1^{(11)}}=67.03500835834600$ and $e^{x_2^{(11)}}=-35.073140188469985$. 
Thus, wave collision occurs at discrete time $n=11$, not at discrete time $n=6$. 
Figure~\ref{fig6:2peakons} presents the motions of the peakon and antipeakon at discrete-time $n=0,1,\ldots,10$. 
From Fig.~\ref{fig6:2peakons}, we see that as the two peakon waves near each other, the wave shape between a peakon and an antipeakon becomes perpendicular to the horizontal. 
\begin{figure}[tb]
\centering
\includegraphics[width=0.6\textwidth]{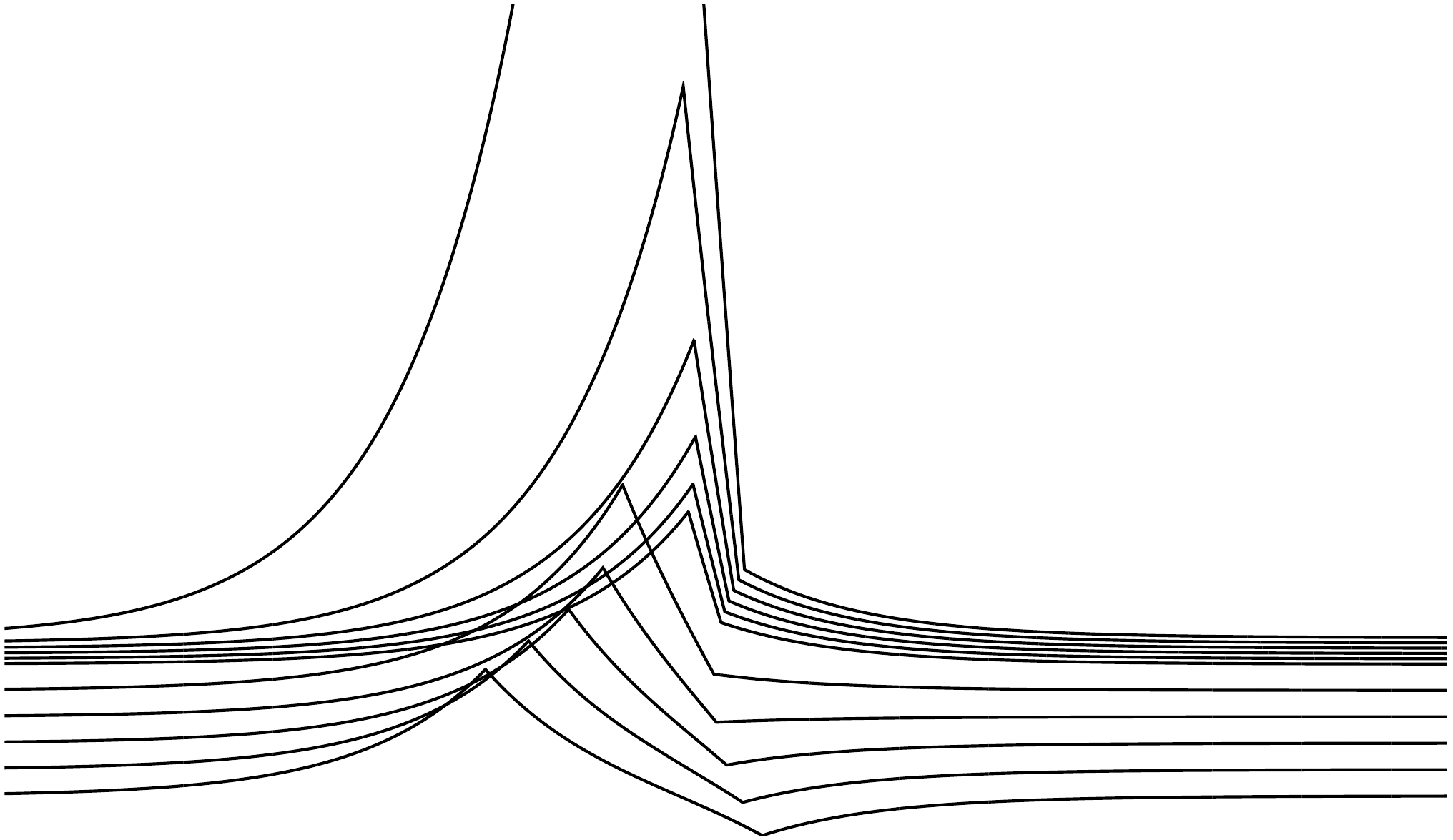}
\caption{3D plot of the third example.} \label{fig6:2peakons} 
\end{figure}
%

\section{Concluding remarks}
\label{sec:6}
%
In this paper, we first related the Camassa-Holm (CH) peakon equation to the Toda equation in which discrete-time evolutions generate similarity transformations of tridiagonal matrices.  Next, by focusing on spectral transformations of orthogonal polynomials, we examined some properties concerning discrete-time evolutions of orthogonal polynomials. From the matrix used in the discrete-time evolutions of the orthogonal polynomials, we derived shifted $LR$-like transformations, and subsequently arrived at a time-discretization version of the CH peakon equation. 
To validate this discretization, we showed that a continuous-limit of the shifted $LR$-like transformation coincides with the Lax representation of the CH peakon equation. 
It is important to note that we successfully achieved a nonautonomous discretization of the CH peakon equation through spectral transformations of orthogonal polynomials.
Finally, we expressed the solution to the discrete CH peakon equation using Hankel determinants, and then clarified asymptotic behaviors in the discrete CH peakon equation. We also presented three examples to numerically demonstrate peakon wave motion. 

As shown in Section \ref{sec:4}, the perspective of matrix $LR$ transformations is useful in relating the discrete CH peakon equation to the peakon lattice equation~\cite{Ragnisco_1996}. 
An aim of future work is thus to apply the $LR$ perspective to clarify relationships between the CH equation, which is the origin of the discrete CH peakon equation, and its derivatives such as the semi-discrete CH equation~\cite{Feng_2010,Ohta_2008}, the full-discrete CH equation and the CH peakon equation. 
Another future work is to design a numerical algorithm for computing matrix eigenvalues based on the discrete CH peakon equation. 
\section*{Acknowledgements}
This work was partially supported by the joint project of Kyoto University and Toyota Motor
Corporation, titled ``Advanced Mathematical Science for Mobility Society". The research of ST was supported by JSPS Grant-in-Aid for Scientific Research (B), 19H01792. 
%
%

%
%
%
\end{document}